\newif\ifcgta
\newif\ifarxiv    \ifcgta \arxivfalse \else \arxivtrue \fi
\newtheorem{theorem}{Theorem}[section]\newtheorem{lemma}[theorem]{Lemma}\newtheorem{proposition}[theorem]{Proposition}\newtheorem{obs}[theorem]{Observation}\newtheorem{cor}[theorem]{Corollary}
\theoremstyle{definition}\newtheorem{definition}[theorem]{Definition}\theoremstyle{remark}\newtheorem{remark}{Remark}
\newcommand{\e}{\emph}\renewcommand{\cal}[1]{\ensuremath{\mathcal{{#1}}}\xspace}
\newcommand\cori{$C$-oriented\xspace}
\renewcommand{\c}{{\textrm{\ScissorLeft}}}
\newcommand{\opt}{\ensuremath{\mathsf{opt}}\xspace}\newcommand{\apx}{\ensuremath{\mathsf{apx}}\xspace}\newcommand{\apxm}{\ensuremath{\mathsf{apx}_m}\xspace}\newcommand{\apxl}{\ensuremath{\mathsf{apx}_1}\xspace}
\begin{document}
\date{}\title{Minimum-Link Paths Revisited}
\ifarxiv
%	\graphicspath{{arxiv/}}
	\author{Joseph S. B. Mitchell\thanks{Department of Applied Mathematics and Statistics, Stony Brook University. \texttt{joseph.mitchell@stonybrook.edu}} \and Valentin Polishchuk\thanks{Helsinki Institute for Information Technology, Department of Computer Science, University of Helsinki. \texttt{firstname.lastname@helsinki.fi}} \and Mikko Sysikaski\footnotemark[2]}\maketitle
\else
	\graphicspath{{cgta/}}
	\begin{frontmatter}
	\author{Joseph S. B. Mitchell \fnref{stonybrook}}\ead{joseph.mitchell@stonybrook.edu}
	\author{Valentin Polishchuk\fnref{helsinki}}\ead{valentin.polishchuk@helsinki.fi}
	\author{Mikko Sysikaski\fnref{helsinki}}\ead{mikko.sysikaski@helsinki.fi}
	\fntext[stonybrook]{Department of Applied Mathematics and Statistics, Stony Brook University}%. \texttt{joseph.mitchell@stonybrook.edu}}
	\fntext[helsinki]{Helsinki Institute for Information Technology, Department of Computer Science, University of Helsinki}%. \texttt{firstname.lastname@helsinki.fi}}
\fi

\begin{abstract}A path or a polygonal domain is \e{$C$-oriented} if the orientations of its edges belong to a set of $C$ given orientations; this is a generalization of the notable rectilinear case ($C=2$). We study exact and approximation algorithms for minimum-link $C$-oriented paths and paths with unrestricted orientations, both in $C$-oriented and in general domains.

Our two main algorithms are as follows:
\begin{description}
\item A subquadratic-time algorithm with a non-trivial approximation guarantee for general (unrestricted-orientation) minimum-link paths in general domains.
\item An algorithm to find a minimum-link $C$-oriented path in a $C$-oriented domain. Our algorithm is simpler and more time-space efficient than the prior algorithm.% of [Adegeest,  Overmars and Snoeyink, 1994].
\end{description}

We also obtain several related results:
\begin{itemize}
\item 3SUM-hardness of determining the link distance with unrestricted orientations (even in a rectilinear domain). %This answers a question from the survey of Mitchell [Goodman and O'Rourke, eds., 1997, 2004] and Problem 22 in The Open Problems Project.
\item An optimal algorithm for finding a minimum-link rectilinear path in a rectilinear domain. The algorithm and its analysis are simpler than the existing ones.% of [Sato, Sakanaka and Ohtsuki, 1987], [Das and Narasimhan, 1991].
\item An extension of our methods to find a $C$-oriented minimum-link path in a general (not necessarily $C$-oriented) domain.
\item A more efficient algorithm to compute a 2-approximate $C$-oriented minimum-link path.
\item A notion of ``robust'' paths. We show how minimum-link $C$-oriented paths approximate the robust paths with unrestricted orientations to within an additive error of 1.
\end{itemize}

\end{abstract}
\ifcgta\end{frontmatter}\fi

\section{Introduction}Minimum-link problems arise in motion planning with turn costs, in line simplification, guarding applications, VLSI, wireless communication, and other areas. %visibility computation (kinda mentioned in guarding), robotics (kinda mentioned in motion planning)
An instance of the problem is specified by an $n$-vertex polygonal domain $P$ with $h$ holes, and two points $s,t\in P$; the goal is to find an $s\-t$ path with the fewest edges (links). In the query version of the problem, the goal is to build a data structure (link distance map) to efficiently answer link distance queries with~$s$ fixed.

The algorithm of Mitchell, Rote and Woeginger \cite{mrw} computes a minimum-link path in $O(n^2\alpha^2(n)\log n)$ time, where $\alpha$ is the inverse Ackermann function. It was believed that a faster algorithm is possible (e.g., in \cite[p.~263]{dn} the result of \cite{mrw} is called ``suboptimal''). Nevertheless, the only previously known lower bound, also due to \cite{mrw}, was $\Omega(n\log n)$. The same bounds for the \e{rectilinear} case are given in \cite{dn,lyw}. % the latter 1 is according to David's thesis, p.170; he also mentions Joe-Piatko-Estie's FoCS92 paper.
Also, no approximation algorithm was previously known.

In this paper (Section~\ref{sec:unrestricted}) we give a subquadratic-time $O(\sqrt{h})$-approximation algorithm for the minimum-link path problem. We also observe (Theorem~\ref{thm:hard}) that finding the exact solution is 3SUM-hard; this answers a question from the survey \cite{SPsurvey} %of Mitchell [Goodman and O'Rourke, eds., 1997, 2004]
and Problem 22 in The Open Problems Project \cite{topp}.

Our 3SUM-hardness proof suggests that the problem's complexity stems from allowing the path edges to go in arbitrary directions. This---along with practical considerations---motivates the restricted, \e{$C$-oriented} setting \cite{aos,www,guting,gutingOttmann,hs,rw,neyer} (Fig.~\ref{fig:coriExample}) in which orientations of path edges come from a fixed set $C$ of directions. (Abusing notation, we use $C$ to denote also the cardinality of the set $C$.) Adegeest, Overmars and Snoyeink \cite{aos} presented two algorithms for finding minimum-link \cori paths in \cori domains -- one running in $O(C^2n\log n)$ time and space, the other in $O(C^2n\log^2n)$ time and $O(C^2n)$ space.
\begin{figure}\centering\includegraphics[width=.3\textwidth]{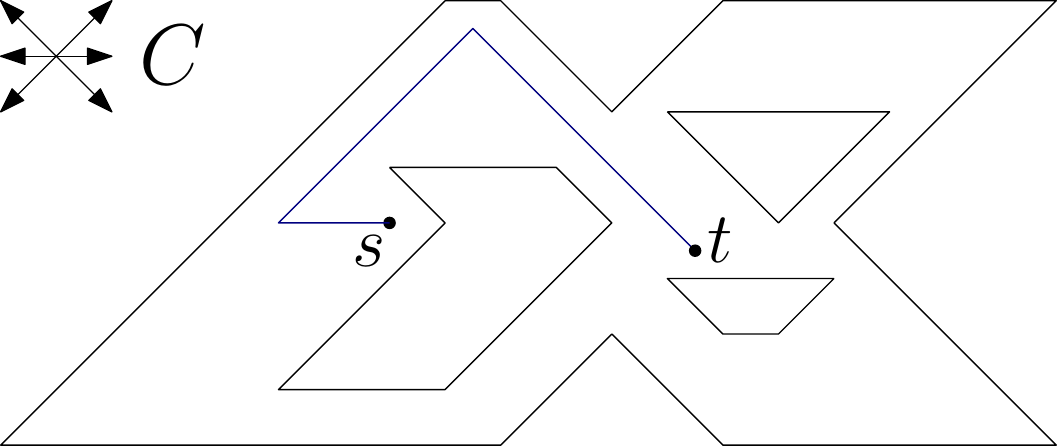}\caption{A \cori domain and a minimum-link \cori path in it.}\label{fig:coriExample}\end{figure}

In Section~\ref{sec:cori} we present an $O(C^2n\log n)$-time $O(Cn)$-space algorithm, slightly improving on both algorithms from~\cite{aos}.\footnote{The algorithm was also presented at WADS 2011~\cite{wads11}.} As a by-product, in Section~\ref{sec:rect}, we reestablish the optimal time and space bounds claimed in \cite{ylwBends} for computing a minimum-link \e{rectilinear} path amidst rectilinear obstacles. Unlike the earlier papers on the rectilinear case, we use only elementary data structures, which simplifies the algorithm and its analysis. We also show how to find a \cori path in a general domain (Section~\ref{sec:coriinarb}), give an $O(Cn\log n)$-time $O(n)$-space 2-approximation algorithm for \cori paths (Section~\ref{sec:cori}), and investigate in what sense \cori paths can approximate minimum-link paths with unrestricted orientation (Section~\ref{sec:robust}).
%cite Amy Briggs work somwhere?

All of our algorithms not only find minimum-link paths but also build, within the same time and space bounds, the corresponding link distance maps---exact or approximate. For instance, using our algorithms, one can construct approximate (additive or multiplicative) maps for general minimum-link paths in general domains in subquadratic time and linear space. This is in contrast with the exact link distance maps, which may have quartic complexity \cite{so}.

\section{Paths with unrestricted orientations}\label{sec:unrestricted}The 3SUM-hardness of finding a minimum-link path can be seen easily, as we now observe. Start from an instance of the 3SUM-hard problem GeomBase considered in~\cite{go95}: Given a set $S$ of points lying on 3 parallel lines $l_1,l_2,l_3$, do there exist 3 points from $S$ lying on a line $l\notin\{l_1,l_2,l_3\}$? Construct an instance of the minimum-link path problem as follows (Fig.~\ref{fig:reduction}, left): $l_1,l_2,l_3$ become obstacles, and each point $p\in S$ is a gap punched in the obstacle. The $s\textrm-t$ link distance is 3 if and only if there exist 3 collinear gaps $p_i,i=1,2,3$, such that $p_i\in l_i$.

We thus obtain:\begin{theorem}\label{thm:hard}Determining the link distance, for paths with unrestricted orientations, between two points of a polygonal domain with holes is 3SUM-hard.  In particular, it is 3SUM-hard to decide if there exists a 3-link path between two points in a rectilinear domain.\end{theorem}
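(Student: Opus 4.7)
The plan is to flesh out the reduction sketched before the theorem statement. Starting from an instance of GeomBase with point set $S$ on three parallel horizontal lines $l_1, l_2, l_3$, I would construct a polygonal domain by turning each $l_i$ into a thin horizontal obstacle (a long thin rectangle) with a small hole at each $p \in S \cap l_i$, and place the source $s$ strictly above $l_1$ and the target $t$ strictly below $l_3$. Every path from $s$ to $t$ must then cross each of the three walls through one of the holes, picking out a gap $p_i \in S \cap l_i$ per wall.

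First, for the upper bound, if there is a triple $p_1, p_2, p_3$ (with $p_i \in l_i$) collinear on a line $\ell \notin \{l_1,l_2,l_3\}$, a 3-link path uses one link from $s$ to a point of $\ell$ just above $l_1$, one link along $\ell$ threading all three holes, and one link from $\ell$ just below $l_3$ to $t$. For the lower bound, I would rule out 1- and 2-link paths whenever no such triple exists. A 1-link path would require $s$, $t$ and three aligned gaps all to be collinear, which is avoided by perturbing $s$ and $t$. A 2-link path $s \to q \to t$ splits according to which horizontal strip contains $q$: if $q$ lies above $l_1$ or below $l_3$, then the long link crosses all three walls and hence threads three collinear gaps, producing a triple; if $q$ lies in a middle strip between two consecutive walls, then one of the two links crosses two consecutive walls, so a line through $s$ (or through $t$) passes through two gaps on two different $l_i$.

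The main obstacle is to ensure that the spurious alignments arising in the second subcase can be ruled out by the choice of $s$ and $t$. The $O(|S|^2)$ lines joining any two gaps on two different $l_i$ each meet the horizontal line above $l_1$ (respectively below $l_3$) in a single point, so placing $s$ and $t$ generically on those lines---for instance, at algebraically independent rational coordinates---avoids all bad lines simultaneously, and such a placement is computable in polynomial time. This gives link distance exactly $3$ iff GeomBase answers YES, proving 3SUM-hardness. For the rectilinear strengthening, the construction is already axis-parallel: thin axis-aligned rectangles with axis-aligned rectangular notches at each gap realize the three walls, and the analysis above uses only that any path crossing a wall must pass through a gap, so it carries over unchanged.
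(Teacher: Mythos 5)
Your reduction matches the paper's in spirit (same source problem GeomBase, same idea of lines with gaps and a threading link), but the specific geometry you chose — $s$ strictly above $l_1$, $t$ strictly below $l_3$, with the three walls in a straight stack — does not actually establish the hardness claim, and the lower-bound half of your argument has a real gap.

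You carefully rule out $1$- and $2$-link paths in the NO case by genericity of $s$ and $t$, but you never rule out $3$-link paths, and that is exactly what the theorem requires: ``decide whether a 3-link path exists'' means deciding whether the link distance is $\le 3$, so in the NO case you must show the distance is $\ge 4$. With $s$ above $l_1$ and $t$ below $l_3$, there are two families of $3$-link paths that have nothing to do with collinear triples and that your genericity argument cannot exclude. First, $s\to p_1 \to p_2 \to t$ with one wall crossed per link: here $p_1$ and $p_2$ are free interior points, so there is no constraint tying $s$ or $t$ to any line through two gaps, and whether such a path exists is a genuine geometric feasibility question that your argument never addresses. Second, and more damning, take a path whose middle link crosses two consecutive walls, say through gaps $g_2\in l_2$ and $g_3\in l_3$, and whose first link crosses only $l_1$ through some $g_1$: the turning point $p_1$ is the intersection of the line $\langle g_2,g_3\rangle$ with the line $\langle s,g_1\rangle$, and $s$ admits such a path as soon as $s$ lies in the (two-dimensional) cone swept by lines through $g_1$ and points of $\langle g_2,g_3\rangle$ inside the strip between $l_1$ and $l_2$. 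That is an open region, so no finite perturbation of $s$ avoids it for all choices of $g_1,g_2,g_3$. Hence in your layout there will typically be a $3$-link path even when GeomBase is infeasible, so the claimed equivalence ``link distance $=3$ iff YES'' is false as stated.

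The paper's construction (Fig.~\ref{fig:reduction}, left, and the zig-zag corridor on the right used for Corollary~\ref{cor:2hard}) places $s$ and $t$ so that the first and last links of any short path \emph{cannot} cross any of the three walls, forcing the single middle link to thread all three gaps when the instance is feasible, and forcing a fourth link when it is not. That geometric feature — one ``wasted'' link at each end — is precisely what makes the count come out to $3$ versus $4$; a straight-through arrangement with $s$ and $t$ on opposite sides of the stack of walls does not have it. To repair your proof you would need either to reproduce that corridor/bend geometry, or to prove directly (for your layout) that every $3$-link path in the NO case is impossible, which, as noted, seems to fail for generic placements of $s$ and $t$.
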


%% xxx Joe rephrased theorem and added remark

\begin{remark}
One can decide if the link distance between points $s$ and $t$ is 1 in time $O(n)$ (just test the segment $st$ for intersection with each edge of the domain).  One can test if the $s\textrm-t$ link distance is $\leq 2$ in time $O(n\log n)$ (just compute the visibility polygons with respect to $s$ and $t$, in time $O(n\log n)$, and test them for intersection, in time $O(n)$).  One can test if the $s\textrm-t$ link distance is $\leq 3$ in time $O(n^2)$ (assuming the visibility polygons with respect to $s$ and $t$ are disjoint, construct the visibility graph within the domain obtained by subtracting the two visibility polygons and check if there exists an (extended) visibility graph edge with endpoints on each of the two visibility polygons).
\end{remark}

Several corollaries are immediate: finding a $4/3-\varepsilon$ multiplicative approximation is 3SUM-hard; there is little hope to design an output-sensitive algorithm that would spend $o(n^2)$ time per link in the optimal path; computing an additive-1 approximation is 3SUM-hard; obstacles having few orientations of edges do not make the problem simpler, etc.
\begin{figure}\centering\includegraphics{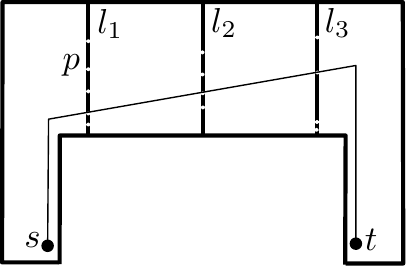}\hfil\includegraphics{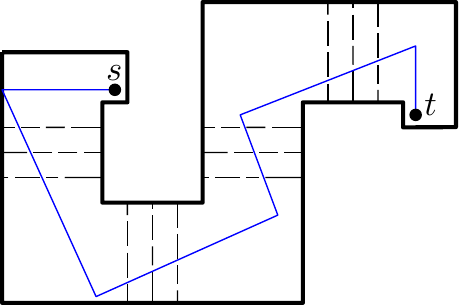}\hfill\caption{Left: The reduction. Right: Hardness of approximation.}\label{fig:reduction}\end{figure}

The proof can be strengthened to show that obtaining an $O(1)$ additive approximation is 3SUM-hard, and that obtaining a $(2-\varepsilon)$ multiplicative approximation is equally hard. To see this, let $k=O(1)$ be an arbitrary integer. Take an instance of GeomBase of size $n/k$, and make $k$ copies of it. Place the copies in a $k$-channel (``zig-zag'') corridor with $k$ channels, with one copy per channel (Fig.~\ref{fig:reduction}, right). There is a path utilizing a single link per channel (i.e., one link per copy of the instance) if and only if the GeomBase instance is feasible, otherwise 2 links per copy are needed. That is, if GeomBase is feasible, the $s\-t$ path will have $2+k$ links, otherwise it will have $2+2k$ links. Distinguishing between the two cases is at least as hard as solving the GeomBase instance of size $\Theta(n/k)=\Theta(n)$.
%The reduction also shows that 2-link reachability is hard... was this known before?

% xxx Joe rephrased below
Hence, we have:
\begin{cor}\label{cor:2hard}Obtaining a $(2-\varepsilon)$-approximation to the link distance within a rectilinear domain is 3SUM-hard.\end{cor}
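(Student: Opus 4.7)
The plan is to amplify the reduction underlying Theorem~\ref{thm:hard} by stacking many independent copies of GeomBase so that the gap between ``yes'' and ``no'' instances becomes a multiplicative factor arbitrarily close to $2$. Starting from an arbitrary GeomBase instance $I$ of size $n/k$, I would create $k$ identical copies $I_1,\dots,I_k$ of $I$, where $k$ is a constant chosen at the end of the argument. Each copy $I_j$ is realised, as in the proof of Theorem~\ref{thm:hard}, by three parallel obstacle segments with gaps punched at the input points, and is placed inside the $j$-th channel of a rectilinear zig-zag corridor with $k$ channels (Fig.~\ref{fig:reduction}, right); the source $s$ and target $t$ are put at the two ends of the corridor. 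The total combinatorial size of the resulting rectilinear polygonal domain is $\Theta(n)$.

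The key structural observation is that the channels are geometrically independent: because of the zig-zag, a single straight segment can intersect at most one channel, and any $s\-t$ path must enter and leave every channel in turn. Traversing channel $j$ costs either one link or at least two links, depending on $I_j$. Specifically, if $I_j$ is a ``yes'' instance of GeomBase then, as in Theorem~\ref{thm:hard}, there is a straight segment crossing the three obstacle lines through collinear gaps, giving a single-link traversal of the channel. If $I_j$ is a ``no'' instance, then no such segment exists and at least two links are required inside the channel. Accounting for the two extra links needed to enter the first channel and to exit the last one, a feasible GeomBase instance yields an $s\-t$ link distance of $2+k$, and an infeasible one yields $2+2k$.

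Given any $\varepsilon>0$, I choose the constant $k$ so that $(2+2k)/(2+k)>2-\varepsilon$, which is possible because this ratio tends to $2$ as $k\to\infty$. A purported $(2-\varepsilon)$-approximation algorithm run on the constructed domain would then return a value strictly below $2k+2$ in the ``yes'' case and at least $2k+2$ in the ``no'' case, so its output distinguishes the two cases and hence decides GeomBase in time subquadratic in $n$, contradicting 3SUM-hardness.

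The main obstacle I anticipate is verifying the ``channel independence'' claim, namely that no path can amortise links across channels and beat the $1$-vs-$2$ per-channel bound. This should follow from making the zig-zag sharp enough -- e.g., placing a thin rectilinear bend of depth larger than the channel width between consecutive channels -- so that the line containing any link that enters channel $j+1$ must have exited channel $j$ strictly before the turn, forcing any link to lie in a single channel. The remaining steps (the one-link traversal when collinear gaps exist, and the two-link lower bound otherwise) are exactly the arguments already used for Theorem~\ref{thm:hard}, applied channel by channel.
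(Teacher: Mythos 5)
Your proposal is correct and takes essentially the same route as the paper: $k$ copies of a size-$(n/k)$ GeomBase instance placed in the channels of a rectilinear zig-zag corridor, giving $2+k$ links in the feasible case versus $2+2k$ in the infeasible case, with $k$ chosen as a constant large enough that $(2+2k)/(2+k)>2-\varepsilon$. The only difference is that you flesh out the choice of $k$ and flag the channel-independence concern explicitly, whereas the paper leaves these to the figure and the reader.
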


\subsection{$O(\sqrt{h})$-approximation algorithm}\label{sec:root_h}We start with an intuitive description of our approach; the technical details follow. We first review the solution to the minimum-link path problem in a \e{simple} polygon ($h=0$).

When the domain $P$ has no holes, a minimum-link path in it can be found in linear time \cite{suri,ghosh,hs} (assuming a real RAM; see \cite{kahan} for bit complexity issues). The general technique is to perform a ``staged illumination'' of the polygon \cite[Sections~26.4, 27.3]{handbook}: At the first stage, place the light source at $s$, and illuminate the visibility polygon of $s$. At the beginning of any subsequent stage, the boundary between the lit and the dark portions of $P$ is a set of ``windows''. Each window $w$ is a chord cutting out a dark portion $P_w\subset P$ (Fig.~\ref{fig:simple}). The crux of the linear-time algorithm is that $P_w$ is unique to $w$: the portion of $P_w$ illuminated at the next stage is exactly what is seen from $w$ (even if a point $p\in P_w$ is seen from another window $w'\ne w$, we do not care about shining light from $w'$ into $P_w$). At any stage, the illumination is guided by the dual graph of a triangulation of $P$. Since the dual is a tree, any triangle is lit only once (triangles intersected by windows are first lit only partially, and are fully lit at the next stage).
\begin{figure}\centering
\includegraphics[page=1]{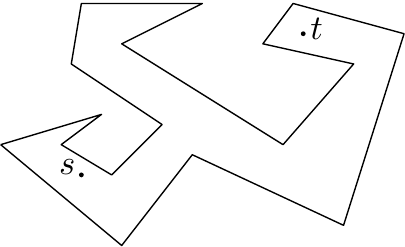}\includegraphics[page=2]{simple}\includegraphics[page=3]{simple}\vskip 10pt
\includegraphics[page=4]{simple}\includegraphics[page=5]{simple}\includegraphics[page=6]{simple}
\caption{Staged illumination in a simple polygon. The windows are yellow.}\label{fig:simple}\end{figure}

To obtain an $O(\sqrt h)$-approximation for the case in which $P$ is a polygonal domain with holes, we employ the staged illumination idea as follows (Fig.~\ref{fig:overview}). We bridge the holes to the outer boundary of $P$, thereby obtaining a simple polygon. We do not make the bridges fully opaque; rather, they are ``semi-transparent'': we triangulate the simple polygon and do the staged illumination from $s$, but, as we go, each time a bridge is illuminated on one of its sides, at the next stage we consider it to be illuminated also on its other side, and continue the staged illumination. Let \opt denote an optimal path, and, abusing notation, also the number of links in it. In comparison to \opt, we are delayed by 1 link each time an edge of \opt crosses a bridge. That is, on every edge of \opt we have as many additional vertices as there are bridges that the edge crosses. Using a low-stabbing-number tree for the bridging \cite{manyppl}, we ensure that each edge of \opt crosses $O(\sqrt h)$ bridges, and thus we obtain an $O(\sqrt{h})$ multiplicative approximation.
\begin{figure}\centering
\includegraphics[page=1]{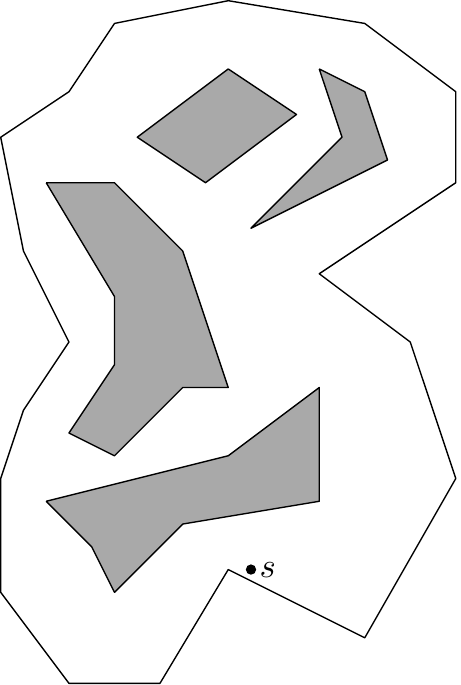}\hfil\includegraphics[page=2]{overview}\hfil\includegraphics[page=3]{overview}\vskip 10pt
\includegraphics[page=4]{overview}\hfil\includegraphics[page=5]{overview}\hfil\includegraphics[page=6]{overview}
\caption{Bridges are red. They block the light, but in the next stage they become windows and emit light on the other side.}\label{fig:overview}\end{figure}

%% xxx Joe: in figure above, we again use color -- are we allowed for the print version?  Also, the red bridges are VERY thin; thicken them?

As described above, the illumination takes $O(nh)$ time, which is quadratic in the worst case, since each of the $\Omega(n)$ triangles can potentially be discovered by light emanating from $h$ different bridges (Fig.~\ref{fig:nh}). To address this inefficiency, we declare a triangle opaque as soon as it is lit $m$ times, for a parameter $m\leq h$ (Fig.~\ref{fig:opaque}). We prove that with the right choice of $m$ this does not delay the illumination by too much, while decreasing the runtime of the illumination to $O(nm)$.
\begin{figure}\centering\includegraphics[page=1]{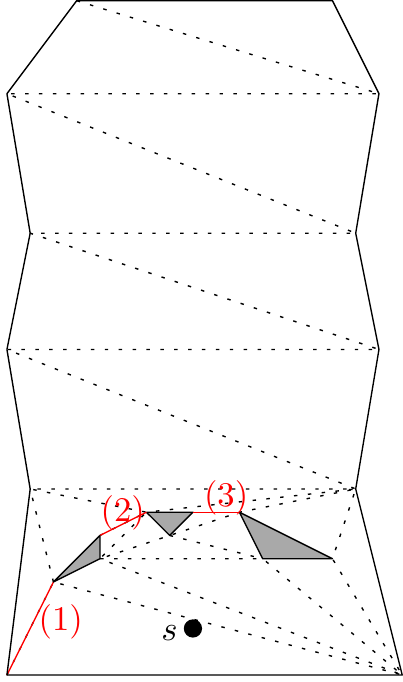}\hfil\includegraphics[page=2]{nh}\hfil\includegraphics[page=3]{nh}\\
(a)\hfil(b)\hfil(c)\hfil\vskip 20pt
\includegraphics[page=4]{nh}\hfil\includegraphics[page=5]{nh}\hfil\includegraphics[page=6]{nh}\\
(d)\hfil(e)\hfil(f)\hfil
\caption{(a)~The original domain, the bridges (shown in red) and a triangulation. (b)~The region illuminated at the first stage, i.e., the region that is seen from~$s$. $w$ is the only window. (c)~The region illuminated at the second stage from bridge~(1). (d)~Regions illuminated at the second stage from bridges~(1) and~(2). (e)~Regions illuminated at the second stage from bridges~(1), (2) and~(3). (f)~Regions illuminated at the second stage from all the bridges and the window~$w$. There are $\Omega(n)$ triangles of the triangulation that are illuminated from each of the bridges (and, in addition, from the window~$w$).}\label{fig:nh}\end{figure}
\begin{figure}\centering
\includegraphics[page=1]{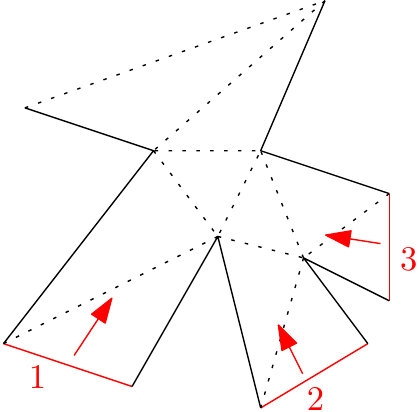}\hfil\includegraphics[page=2]{opaque}\hfil\includegraphics[page=3]{opaque}\vskip 10pt
\includegraphics[page=4]{opaque}\hfil\includegraphics[page=5]{opaque}
\caption{An example using $m=2$. Three windows are shown in red, labeled 1, 2, and 3; arrows indicate the direction in which the light will propagate from the windows. After windows~1 and~2 have propagated the light, some triangles (shown in orange) are 2-lit and are made opaque. Because of this, the light from window~3 lights up only the dark green triangle, but not anything further.}\label{fig:opaque}\end{figure}
\subsubsection{The algorithm}We compute a set $B$ of $h$ line segments, called \e{bridges}, such that the \e{cut polygon} $P^\c=P\setminus B$ is a (weakly) simple polygon (i.e., $B$ bridges the holes to $P$'s outer boundary) and such that any line intersects $O(\sqrt{h})$ bridges. Section~\ref{sec:bridging} details how this can be done in $O(n\log n+h\sqrt h\log h+n\sqrt{h})$ time. We triangulate the cut polygon $P^\c$, and do the staged illumination in it, modified as follows:
\newcounter{aa}
\begin{list}{\textbf{Modification M\arabic{aa}:}}{\usecounter{aa}}\item\label{mod1}We do not compute the windows exactly. Instead, we consider a triangle to be fully lit even if only part of it is illuminated. This simplifies the algorithm, since the boundary of the illuminated region now consists of edges of triangles. To account for a possible underestimation of the link distance that arises from treating partial illumination as full illumination, we add an additional link to our computed $s\-t$ path inside every lit triangle (Fig.~\ref{fig:partial_tri}). Overall this at most doubles the number of links in the output path in comparison to the stage at which $t$ is illuminated at the termination of our algorithm.
\item\label{mod2}Suppose a triangle having a bridge $b\in B$ as its side gets illuminated at stage $k$. As with the standard staged illumination, at stage $k$ we do not continue the illumination across~$b$ (because $b$ is an obstacle, a portion of the boundary of the cut polygon~$P^\c$). However, at stage $k+1$ we treat $b$ as one of the windows, and, thereby, continue the illumination on the other side of the bridge.
\newcounter{aaa} \setcounter{aaa}{\value{aa}}\end{list}
\begin{figure}\centering\hfil\includegraphics[page=1]{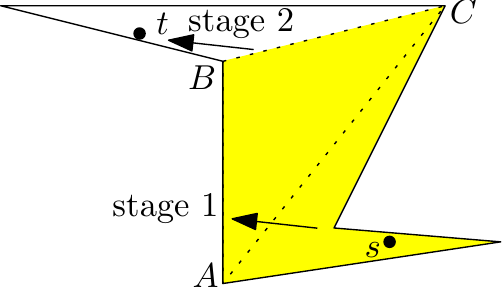}\hfil\includegraphics[page=2]{partial_tri}\hfil\caption{Left: Even though only part of the triangle $ABC$ is seen from~$s$, we fully light up the triangle at the first stage. Because of this, at the second stage we shine light from the side $BC$ and reach $t$ already at the second stage, even though there exists no 2-link $s\textrm-t$ path. Right: A 3-link path output by our algorithm. The middle link is added to account for the underestimation of the link distance brought by the modification {\bf M\ref{mod1}}.}\label{fig:partial_tri}\end{figure}

The second modification is the main difference between our illumination procedure and the standard one. The difference is best explained by identifying each element in $B\cup \{s\}$ with a ``color'' (there are thus $h+1$ colors). We will use the convention that anything colored with a color $b\in B\cup \{s\}$ gets prefix ``$b$-''; e.g., a window with color $b$ is a $b$-window, etc. %Throughout the illumination we maintain the set $A$ of \e{active} colors; initially $A=\{s\}$.

Our illumination starts at $s$, and before any bridge is reached by it, all lit triangles are colored with $s$ (Fig.~\ref{fig:ill}). So far, the process does not differ from the standard illumination (modulo the modification {\bf M\ref{mod1}}). In particular, every triangle is $s$-lit only once because each $s$-window $w_s$ cuts out a unique dark portion $P_{w_s}\subset P^\c$ into which $w_s$ shines with the $s$-light; $P_{w_s}\cap P_{w_s'}=\emptyset$, $\forall w_s'\ne w_s$.
\begin{figure}
\begin{minipage}[c]{0.6\columnwidth}
\centering\includegraphics{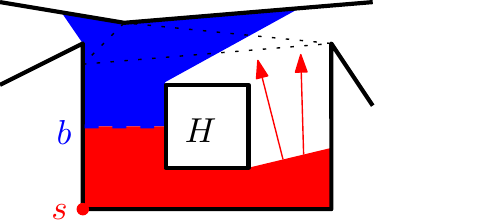}\caption{Point $s$ and the region illuminated at stage 1 are shown in red. Red arrows are some red illumination rays at stage 2. Bridge~$b$ (dashed blue) bridges a hole $H$ to the outer boundary. Crossing the bridge delays the illumination by one stage, and changes the color of the light to that of the bridge. Even though in the original domain $P$, the triangle $T$ (dotted) is seen directly from $s$, the triangle will be lit only at stage~2. $T$ is lit by both red and blue (after which the colors can be merged).}\label{fig:ill}
\end{minipage}\hfill
\begin{minipage}[c]{0.3\columnwidth}
%
%% xxx the dotted edges are really faint and hard to see
\centering\includegraphics{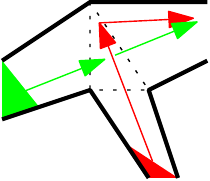}\caption{\apx has a green link, and green light is blocked by red at stage $k$ by a triangle $T$ (dotted): this means that $T$ has been reached with red after $k$ links (just as with green). Anything reachable with green via $T$ can be reached also with red, at the expense of a possible extra turn inside~$T$.}\label{fig:block}
\end{minipage}
\end{figure}

Assume now that a bridge $b$ is reached by the illumination at stage $k$. At stage $k+1$, $b$ starts shining $b$-light on its other side. Since the cut polygon~$P^\c$ is a simple polygon, each triangle is $b$-lit only once. However, nothing prevents a triangle from being seen both from an $s$-window and from a $b$-window (and, in general, from a window of any other color). Thus, any triangle can be discovered as many times as there are colors ($h+1$). This means that the overall time of the illumination is $O(nh)$.

Recall that \opt denotes an optimal path, and also the number of links in the path. Let \apx denote the $s\-t$ path found by our illumination procedure, and also the number of links in the path (i.e., the stage at which $t$ is reached).
\begin{lemma}\label{lem:nh}$\apx=O(\opt\,\sqrt h)$.\end{lemma}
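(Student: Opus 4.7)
My plan is to trace along an optimal path $\opt = (v_0 = s, v_1, \dots, v_{\opt} = t)$ and bound inductively the stage $d(T_i)$ at which the triangle $T_i$ of the triangulated cut polygon $P^\c$ that contains $v_i$ is declared illuminated by the modified staged illumination. The base case $d(T_0) \leq 1$ is immediate, because $T_0$ meets $V(s)$ and is therefore declared fully lit at stage~1 by {\bf M\ref{mod1}}. The goal is to show $d(T_\opt) = O(\opt \sqrt h)$; then {\bf M\ref{mod1}} at most doubles the output length and gives $\apx = O(\opt \sqrt h)$.

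The inductive step rests on a ``one stage per move'' primitive. (i)~If $v \in T$ with $T$ lit at stage $k$, and $v' \in T'$ is connected to $v$ by a straight-line segment $\sigma \subset P^\c$ (i.e.\ $\sigma$ crosses no bridge), then $T'$ is lit by stage $k+1$: if $v'$ is not already lit by stage $k$ then $\sigma$ must cross the stage-$k$ frontier at some point $q$ lying on a stage-$k$ window, and the stage-$(k+1)$ shine from that window propagates along $\sigma$ past $q$ and reaches $v'$. (ii)~By {\bf M\ref{mod2}}, the triangle on the far side of a bridge bordering a stage-$k$-lit triangle is lit by stage $k+1$.

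Now consider the $i$-th link $e_i = v_{i-1} v_i$ of $\opt$. Since $e_i$ is obstacle-free in the original domain $P$, its only encounters with $\partial P^\c$ are with some $k_i$ bridges of $B$; these crossings partition $e_i$ into $k_i+1$ straight sub-segments in $P^\c$ separated by $k_i$ bridge crossings. Alternating primitives (i) and (ii) yields $d(T_i) \leq d(T_{i-1}) + 2k_i + 1$. Summing, $d(T_\opt) \leq 1 + \opt + 2 \sum_{i=1}^{\opt} k_i$. The bridge set $B$ constructed in Section~\ref{sec:bridging} has low stabbing number: every line is crossed by $O(\sqrt h)$ bridges, so $k_i = O(\sqrt h)$ for each $i$, and the claimed bound $\apx = O(\opt \sqrt h)$ follows.

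The step I expect to take the most care is verifying primitive~(i) in the presence of {\bf M\ref{mod1}}: declaring partially-lit triangles fully lit turns the stage-$k$ frontier from an arrangement of visibility chords into a union of triangulation edges, so I must check that the first lit-to-unlit transition along $\sigma$ still happens at a point on a genuine stage-$k$ window (a triangulation edge of a stage-$k$-lit triangle whose opposite neighbour was unlit at stage $k$), which is precisely the boundary edge that the stage-$(k+1)$ sweep shines through.
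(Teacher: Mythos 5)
Your proposal is correct and amounts to a careful unpacking of the paper's (very terse, two-sentence) proof: the paper simply asserts that because each link of \opt crosses $O(\sqrt h)$ bridges, the illumination advances at most $O(\sqrt h)$ stages behind \opt per link. Your inductive bookkeeping via primitives~(i) and~(ii), the bound $d(T_i)\le d(T_{i-1})+2k_i+1$, and the final factor-$2$ adjustment for modification~\textbf{M1} make explicit exactly what the paper leaves implicit; the underlying idea and the source of the $O(\sqrt h)$ factor (the low-stabbing-number bridge set) are identical.
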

\begin{proof}Because we use a low-stabbing-number tree for the bridging, any link of \opt intersects $O(\sqrt h)$ bridges. Thus, our illumination procedure will spend at most $O(\sqrt h)$ stages to propagate light along any one link of \opt.\end{proof}

Since $h$ can be $\Theta(n)$, we do not have a subquadratic algorithm yet. The $O(nh)$ running time is due to the possibility that a triangle can be lit by more than one color. To speed up the algorithm, we further modify the illumination procedure as follows:
\begin{list}{\textbf{Modification M\arabic{aa}:}}{\usecounter{aa}}\setcounter{aa}{\value{aaa}}\item\label{mod3}Fix a number $m=o(h)$. At any stage do the illumination color-by-color. Declare a triangle as an obstacle as soon as it has been reached by $m$ colors. That is, after being lit by $m$ colors, any triangle acts as an obstacle for light of any other color that tries to shine through it: it simply blocks the light.\setcounter{aaa}{\value{aa}}\end{list}	

Let \apxm denote the $s\-t$ path found by the staged illumination with modifications {\bf M\ref{mod1}-M\ref{mod3}}. With any triangle being discovered at most $m$ times, the running time to find \apxm goes down to $O(nm)$. To bound the number of links in the path (denote it by \apxm too), consider the path \apx. Its links are colored according to the illumination with modifications {\bf M\ref{mod1},M\ref{mod2}}. Suppose that one of the links is green. If green color is blocked due to the modification {\bf M\ref{mod3}} (say, at stage $k$ by a triangle $T$), we know that $T$ has been lit at stage $k$ by some other (at least $m$) colors. Any of these colors propagates into the region into which green would have propagated had it not been blocked (Fig.~\ref{fig:block}). Thus, we may have an extra turn in \apxm for each stage when a color of \apx gets blocked. Denoting by $X$ the overall number of stages at which blocking happens during execution of our algorithm, we have
\begin{lemma}\label{lem:apxm}$\apxm\le\apx+X$.\end{lemma}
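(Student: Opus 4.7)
The plan is to reason about the M3 execution by simulating it alongside the M1+M2 execution that produced $\apx$. I would first assign each link of $\apx$ the color of the window whose light propagated along it in the M1+M2 run; this is well-defined because in the M1+M2 procedure each lit triangle has a unique ``parent'' window (the illumination in the simple cut polygon $P^\c$ forms a tree, per color). Then I walk along $\apx$ from $s$ to $t$, maintaining a current ``carrier color'' $c$ and trying to replay $\apx$ link by link in the M3 execution.

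At each triangle $T$ traversed by $\apx$ with carrier color $c$, one of two things can happen under M3. If $T$ is lit by fewer than $m$ colors in the M3 execution at the stage when $c$ would arrive, then $c$ is not blocked at $T$ and the corresponding link of $\apx$ can be reused at no extra cost. Otherwise $T$ has already been $m$-lit in M3 by the current stage $k$, so $c$ is blocked at $T$; but each of the $m$ distinct colors that lit $T$ reached it by stage $k$ (which is the stage at which $c$ itself would have reached $T$), and picking any one of them as a new carrier $c'$ lets me continue past $T$ into whatever region $c$ would have reached, at the cost of at most one extra turn inside $T$ to align the direction of the $c'$-ray with the direction $c$ was about to take (cf.~Fig.~\ref{fig:block}). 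The argument then continues with $c'$ in place of $c$.

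Each such switch of carrier color costs at most one additional link, and a switch can be triggered only at a stage at which blocking actually happens during the M3 run. Summing over all such stages gives at most $X$ extra links beyond those of $\apx$, so $\apxm \le \apx + X$.

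The main obstacle I expect is making the ``one extra turn inside $T$'' step fully rigorous: I need to verify that after the detour the substitute color $c'$ really can continue to illuminate the portion of $P^\c$ that the blocked color would have reached, so that the tail of $\apx$ past $T$ remains available for further replay, possibly with additional substitutions of the same kind. The essential observation here is that an $m$-lit triangle $T$ effectively behaves as a $c'$-window for whichever alternative $c'$ I choose, so the exit edge of $T$ along $\apx$ is reachable by the $c'$-ray modulo one turn inside $T$, which is exactly what preserves the invariant.
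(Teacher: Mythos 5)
Your argument is essentially identical to the paper's: both color the links of $\apx$ by the M1+M2 illumination, observe that when a carrier color is blocked at a triangle $T$ at some stage $k$ then $m$ other colors have already reached $T$ by stage $k$ and any of them can continue (per Fig.~\ref{fig:block}) at the cost of one extra turn inside $T$, and charge each such substitution to a distinct blocking stage, giving at most $X$ extra links. The ``carrier color replay'' framing and the closing sanity check you flag are just a more explicit restatement of the paper's reasoning, not a different route.
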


We emphasize that the difference $\apxm-\apx$ depends on the number of \e{stages} at which blocking happens, not on the total number of blockings encountered throughout the algorithm---blockings that happen outside links of \apx do not contribute to the difference. Also, note that the path \apx and the colors of its links enter only the \e{analysis} for \apxm; in the algorithm for \apxm we, of course, do not compute \apx and do not (have to) know its links' colors.

What remains is to bound $X$. To do so we relate double-lighting to color-merging via the following observation: if a triangle $T$ can be lit both by a red color $r$ and a green color $g$ (i.e., if $T$ is seen from an $r$-window and a $g$-window) at stage $k$, then no triangle $T'\ne T$ may ever be $r$-lit through one side and be $g$-lit through another side after stage $k$. That is, if $T'$ is double-lit, the two colors $r,g$ must arrive through the same side of $T'$. Hence the two colors effectively can be merged into a single color. To see why this is the case, recall that any color lights up triangles following the dual of the triangulation of the cut polygon~$P^\c$ (which is a simple polygon). If red and green arrive at $T'$ via different sides, then there are two distinct $T\-T'$ paths (one followed by red light, one followed by green) in the dual---a contradiction (Fig.~\ref{fig:2lighting}).
\begin{remark}In terms of the dark portions $P_w\subset P^\c$ cut out by windows, the merging formally means that the collection of sets $P_w$, for $w$ ranging over all windows, is a laminar family (if two sets are not disjoint, then one is contained in the other). This is not as strong as a family of pairwise-disjoint sets (as is the case for staged illumination in simple polygons) but is good enough to ensure that each triangle effectively is discovered only once.\end{remark}
%
%% xxx Joe: the green is very low contrast and the figure is faint; bolden?
\begin{figure}\centering\includegraphics{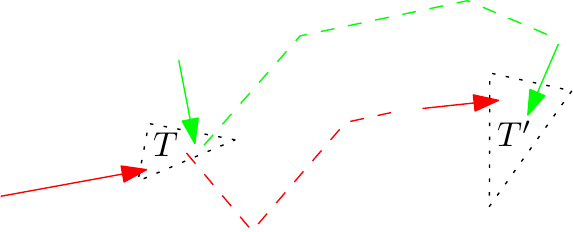}\caption{$T$ was $g$-lit and $r$-lit at some stage. If later $T'$ is $g$-lit and $r$-lit through different sides, then there are two distinct $T\textrm-T'$ paths in the dual of the triangulation of the simple polygon~$P^\c$.}\label{fig:2lighting}\end{figure}
\begin{lemma}\label{lem:X}$X\le h/m$.\end{lemma}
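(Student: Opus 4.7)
The plan is to bound $X$ via a potential-function argument on \emph{super-colors}---equivalence classes of the $h+1$ original colors (one for $s$, one per bridge) under the merging relation identified in the observation preceding the lemma. Let $\Phi$ denote the current number of super-colors; initially $\Phi=h+1$ and $\Phi\ge 1$ throughout.

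The main step is to show that each opacity event---the stage at which a triangle $T$ first accumulates its $m$-th color and becomes an obstacle---reduces $\Phi$ by $m-1$. By the laminar property, as soon as two colors both light any single triangle, they merge into one super-color and propagate identically afterwards (lighting the same triangles through the same sides). Hence the count of distinct colors that accumulate at $T$ always equals the number of distinct super-colors that have reached $T$; at the moment this count first hits $m$, precisely $m$ super-colors collapse into one, reducing $\Phi$ by $m-1$. Summed over all opacity events, the total drop in $\Phi$ is at most $h$, so there are at most $h/(m-1)=O(h/m)$ opacity events---which, up to the $\pm 1$ in the denominator, yields the claimed bound.

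The main obstacle will be tying $X$ (``stages at which blocking happens'') cleanly to the opacity-event count. Every blocking stage is caused by some opaque triangle, so the natural charge is to the stage at which that triangle became opaque. I would verify that this charging is well-defined and non-repeating: any later stage at which the same opaque $T$ again blocks a color of \apx can be absorbed, either because it corresponds to a further would-be merging of the blocked color into $T$'s super-color (draining $\Phi$ by one more unit), or because the laminar structure forces all colors of \apx blocked by $T$ to arrive through the same side of $T$ and thus to belong to a single stage of \apx's own propagation, so no additional counted stage is needed. Modulo this careful accounting, the bound $X\le h/m$ falls out of the potential argument above.
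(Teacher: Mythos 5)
Your core idea---merging colors is a laminar/potential argument on the count of super-colors starting at $h+1$---is exactly the paper's, but you route it through the wrong intermediate quantity, and you drop one color in the accounting. The paper does not bound the number of \emph{opacity events}. It charges each counted stage directly: at a stage where a color $g$ carried by a link of \apx is blocked by a triangle $T$, that triangle has already been reached by the $m$ colors that made it opaque \emph{and} by $g$ itself (a $g$-window sees $T$), so $m+1$ previously distinct colors coalesce at that stage. This drops the color count by $m$ per counted stage, giving $X\le h/m$ in one step. Your tally credits only the $m$ colors causing opacity, which is why you land on a drop of $m-1$ and the bound $h/(m-1)$. Including the blocked color itself in the merge is precisely what closes the ``$\pm1$ in the denominator'' you flagged.

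The reduction you then attempt, bounding $X$ by the number of opacity events, is both unnecessary and not established by the arguments you sketch. A single opaque triangle can, in principle, block \apx-colors at several distinct stages (the links of \apx carry different colors at different stages of the illumination), so the charge is not obviously injective. Your first fallback (extracting ``one more unit'' of $\Phi$-drain per repeat block) would, if made rigorous, just reconstruct a per-blocking-stage charge---at which point the detour through opacity events buys nothing. Your second fallback is not supported by the laminar observation: that observation constrains the sides through which a pair of \emph{already-merged} colors may light a later triangle, not the side through which a hitherto-unrelated \apx-color arrives at $T$, so you cannot conclude that all blocked \apx-colors funnel through one side of $T$ at one stage. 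Charging each counted stage directly to an $(m+1)$-way merge, as the paper does, sidesteps both issues.
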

\begin{proof}Recall that we are considering blocking that occurs along a single path. There are $h+1$ colors to start with. Each time a blocking occurs, a triangle has been seen from windows of $m$ colors plus a window of the blocked color.\end{proof}
We emphasize that the blocking is algorithmic (i.e., we do block during the illumination) while the merging is used only in the analysis. (Of course, we could maintain the colors in a union-find data structure and merge them, but we do not need to do it: even if we propagate the colors unmodified, the merged colors will act as a single color exactly in the sense that no triangle will be discovered by more than one of them.)

From Lemmas~\ref{lem:nh}, \ref{lem:apxm}, \ref{lem:X} we have:
\begin{proposition}\label{prop:ill}If $B$ is given, then for any $m$, in $O(nm)$ time one can find a path with $O(\opt\,\sqrt h)+h/m$ links.\end{proposition}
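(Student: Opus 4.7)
The plan is to combine the three lemmas already established with a runtime analysis of the modified staged illumination. Given $B$, we first triangulate the cut polygon $P^\c$; this takes $O(n\log n)$ time by any standard algorithm (in fact $O(n)$ since $P^\c$ is weakly simple once $B$ is given, but we will absorb this into the $O(nm)$ bound). We then execute the staged illumination on the dual graph of this triangulation, using modifications \textbf{M\ref{mod1}}--\textbf{M\ref{mod3}} with parameter $m$.

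For the runtime, observe that the work per stage is proportional to the number of (triangle, color) pairs that get newly lit at that stage, since lighting a triangle amounts to following a constant number of dual edges and pushing the color across the sides that are not the incoming side. Modification \textbf{M\ref{mod3}} caps the number of colors that can ever be propagated into a single triangle by $m$ (after that, the triangle becomes opaque to every new color). Consequently, summed over all stages, the total number of lighting events is $O(nm)$, which bounds the total running time. The path \apxm itself is recovered in the usual way by back-pointers along the illumination.

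For the link count, Lemma~\ref{lem:nh} gives $\apx = O(\opt\sqrt{h})$ for the path produced by the illumination with only \textbf{M\ref{mod1}},\textbf{M\ref{mod2}}. Lemma~\ref{lem:apxm} says that introducing the blocking of \textbf{M\ref{mod3}} costs at most one extra link per stage at which a color of \apx is blocked, i.e., $\apxm \le \apx + X$. Finally, Lemma~\ref{lem:X} bounds $X \le h/m$, using the color-merging observation: each blocking event along \apx consumes a fresh color from the initial pool of $h+1$. Chaining these three inequalities yields $\apxm \le O(\opt\sqrt{h}) + h/m$, as required.

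The main obstacle I expect is the runtime argument: one has to be careful that declaring a triangle ``opaque after $m$ lightings'' really caps the work at $O(nm)$ rather than $O(nm)$ per stage. The key point is that between consecutive stages the total charge we pay is to the newly-lit (triangle, color) pairs, and each triangle appears in at most $m$ such pairs over the entire execution; thus the amortized cost of constant work per pair (a few dual edges plus pushing the color through the non-incoming sides) sums to $O(nm)$, independent of the number of stages. With this in hand, the link-count bound follows immediately from Lemmas~\ref{lem:nh}, \ref{lem:apxm}, and \ref{lem:X}.
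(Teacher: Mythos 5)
Your proposal is correct and follows essentially the same route as the paper: the paper's proof is literally the one-line derivation ``From Lemmas~\ref{lem:nh}, \ref{lem:apxm}, \ref{lem:X} we have [the proposition],'' with the $O(nm)$ running time asserted just before Lemma~\ref{lem:apxm} by the same observation you make (each triangle is discovered at most $m$ times, and work is charged to newly-lit (triangle, color) pairs). You simply spell out the amortized charging argument in more detail than the paper does.
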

Choosing $m=\Theta(\sqrt h)$, we obtain the main result of this section:
\begin{theorem}\label{thm:sqrth}An $O(\sqrt h)$-approximate minimum-link path can be found in $O(n\sqrt h+n\log n+h\sqrt h\log h)$ time.\end{theorem}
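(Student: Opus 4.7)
The plan is to obtain the theorem by instantiating the parameter $m$ in Proposition~\ref{prop:ill} and combining its bound with the bridge-construction cost quoted before the proposition.

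First I would invoke the bridging subroutine from Section~\ref{sec:bridging} to obtain, in $O(n\log n+h\sqrt h\log h+n\sqrt h)$ time, a set $B$ of $h$ bridges whose stabbing number against any line is $O(\sqrt h)$; this is the hypothesis that Proposition~\ref{prop:ill} needs. Then I would set $m=\lceil\sqrt h\rceil$ and run the staged-illumination algorithm with modifications {\bf M\ref{mod1}}--{\bf M\ref{mod3}} on the triangulated cut polygon $P^\c$. By Proposition~\ref{prop:ill}, the illumination runs in $O(nm)=O(n\sqrt h)$ time and produces a path \apxm with
\[
\apxm \;=\; O(\opt\,\sqrt h)+\frac{h}{m} \;=\; O(\opt\,\sqrt h)+\sqrt h.
\]
Since $\opt\ge 1$ (assuming $s\ne t$; the case $s=t$ is trivial), the additive $\sqrt h$ term is absorbed into $O(\opt\,\sqrt h)$, so \apxm is an $O(\sqrt h)$-approximation of the minimum-link path.

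Finally I would add up the running times: the bridge construction contributes $O(n\log n+h\sqrt h\log h+n\sqrt h)$, the triangulation of the simple polygon $P^\c$ can be done in $O(n\log n)$ (or faster, but this suffices), and the illumination contributes $O(n\sqrt h)$. The total is $O(n\sqrt h+n\log n+h\sqrt h\log h)$, as claimed.

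There is essentially no obstacle here, since all the work has been done in Proposition~\ref{prop:ill} and in the bridging lemma; the only point that deserves a sentence of justification is the choice $m=\Theta(\sqrt h)$, which is the value that balances the additive $h/m$ term against the multiplicative $O(\opt\,\sqrt h)$ term while keeping the $O(nm)$ runtime subquadratic. If one wished to optimize further for small \opt, a different choice of $m$ would not improve the asymptotic approximation ratio, so $m=\lceil\sqrt h\rceil$ is the natural pick.
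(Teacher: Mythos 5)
Your proposal is correct and follows the paper's route exactly: invoke the bridging subroutine for the stated preprocessing time, apply Proposition~\ref{prop:ill} with $m=\Theta(\sqrt h)$, and absorb the additive $h/m=\sqrt h$ term into $O(\opt\sqrt h)$ using $\opt\ge1$. The paper compresses all of this into the single line ``Choosing $m=\Theta(\sqrt h)$, we obtain the main result''; you have simply made the bookkeeping explicit.
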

Our algorithm labels the triangles with approximate link distance to $s$ and thus also builds a linear-size approximate link distance map in subquadratic time (we recall that the exact map can have $\Theta(n^4)$ complexity \cite{so}).
\begin{remark}With the cutting of $P$ into $P^\c$ and propagating light on the other sides of bridges, our approach resembles, in some sense, working in the universal covering space of $P$ \cite{hs}.\end{remark}
\subsubsection*{$O(\sqrt h\log h)$-approximation with $m=1$}The bottleneck in the algorithm of Theorem~\ref{thm:sqrth} is computing the bridges $B$. If a more efficient solution is designed for the bridging, it will be of interest to do the illumination faster, too. There may also be other reasons to speed up the illumination: e.g., note that the bridges are good irrespective of the choice of $s$ and $t$; thus, one may use the same $B$ when finding minimum-link paths between different pairs $s,t$. (Of course, it would be more interesting to design a special data structure to do 2-point approximate link distance queries efficiently; this is however beyond the scope of the current paper.)

Our illumination is fastest with $m=1$ (i.e., when any triangle becomes an obstacle and blocks light as soon as it is lit by \e{any} color). However, with $m=1$ Proposition~\ref{prop:ill} does not give a multiplicative approximation guarantee (because \opt can be $o(\sqrt h)$). We now show that with yet another modification, one can use $m=1$ and guarantee an $O(\sqrt h\log h)$ approximation factor while only adding $O(h\log h)$ to the time of the illumination.

Specifically, our last modification is as follows:
\begin{list}{\textbf{Modification M\arabic{aa}:}}{\usecounter{aa}}\setcounter{aa}{\value{aaa}}\item\label{mod4}Identify colors with (not necessarily distinct) integers, and at each stage do the illumination color-by-color in decreasing order (with ties broken arbitrary). When a color first becomes active, it is identified with 1. When a color $g$ gets blocked by a color $r$, merge the colors into a single color $r+g$ (i.e., any color equals to the number of the colors merged into it).\end{list}

Note that with this modification, we do the color merging algorithmically. To do it efficiently, we store the colors, in the order in which we illuminate from them, in a binary search tree. The tree is updated each time a merge occurs. There can be at most $h$ merges, so the total time spent in updating the tree is $O(h\log h)$.

Let \apxl denote the number of links in the $s\-t$ path found by the staged illumination with modifications {\bf M\ref{mod1}-M\ref{mod4}}.
\begin{lemma}$\apxl=O(\apx\cdot\log h)$.\end{lemma}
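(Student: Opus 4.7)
The plan is to follow the accounting of Lemma~\ref{lem:apxm}: for each link of $\apx$, trace the ``tracking color'' that propagates along this link in the $\apxl$ execution, and charge one extra stage to $\apxl$ for each time this tracking color is blocked. Summing over the $\apx$ links of the intermediate path, the lemma will follow once we show that the tracking color of any single link can be blocked at most $O(\log h)$ times.

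The bound is meant to exploit the geometric growth induced by modification {\bf M\ref{mod4}}. When a color $g$ is blocked by a color $r$, the two merge into a color of value $|g|+|r|$. The decreasing-order processing rule is arranged so that at the moment of blocking the blocker's value is at least the blocked color's value, giving $|g|+|r|\ge 2|g|$; equivalently, every blocking at least doubles the tracking color's value. Because the total of all color values is conserved throughout the execution (each merge simply sums two values), no single color can ever exceed the initial total $h+1$. A tracking color starting at value~$1$ can therefore double only $\lceil\log_2(h+1)\rceil=O(\log h)$ times before saturating, yielding the per-link blocking bound. Combining with the charging scheme,
\[
\apxl \;\le\; \apx + \apx\cdot O(\log h) \;=\; O(\apx\log h).
\]

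The main obstacle I expect is securing $|r|\ge|g|$ at every blocking. Within a single stage this is immediate from the decreasing-order processing: any color $r$ that was processed before $g$ in the current stage had value at least $|g|$ at that moment, and subsequent merges can only increase it further. The delicate case is a blocking against a triangle $T$ whose label was installed in an earlier stage, since a priori this stale label's value could lag behind the growth of the blocked color's class. I would handle this by maintaining, as an invariant (updated lazily via a union--find-style structure on color classes, as the paper already does to achieve $O(h\log h)$ total merge cost), that each triangle label is always promoted to the current representative of its color class; combined with an induction over stages, this forces the blocker's current value to dominate the blocked color's current value at every blocking event. Should direct doubling fail in isolated configurations, the amortized potential $\Phi=\sum_{c \text{ active}}\log|c|$ should absorb the less favorable ratios into the surrounding merges, preserving the $O(\log h)$ per-link bound.
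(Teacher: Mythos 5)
Your argument follows the paper's proof essentially verbatim: charge one extra stage to $\apxl$ for each blocking encountered along a link of $\apx$, use the decreasing-order processing of \textbf{M4} to argue that the blocker's value $r$ satisfies $r\ge g$ so that the merge at least doubles the tracking color's value, and bound the number of doublings by $O(\log h)$ because the total color value is conserved at $h+1$. The stale-label subtlety you flag is also glossed over in the paper (which simply asserts $r\ge g$ without distinguishing same-stage from earlier-stage blockers), so your write-up is at least as rigorous as the paper's; the lazy union-find promotion and potential-function hedges you sketch are safeguards the paper itself does not supply.
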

\begin{proof}Consider a link $ab$ of the path \apx (the path found with modifications {\bf M\ref{mod1}-M\ref{mod2}}). Suppose that when we do the illumination with all modifications {\bf M\ref{mod1}-M\ref{mod4}}, the triangle containing $a$ is lit with a green color, identified with integer $g$. If green light fails to reach $b$, it must be blocked by some other color, say a red color, identified with integer $r$. This means that $r\ge g$. Since $r$ and $g$ bump into each other, they get merged into a single color $r+g\ge2g$. That is, whenever propagation is delayed at some stage, the color number at least doubles. Thus, after $O(\log h)$ such delays, the illuminating color is at least $h/2+1$, which means the we are illuminating with the highest possible color and no color remains to block light from reaching~$b$.\end{proof}

We thus obtain the following variation of Proposition~\ref{prop:ill}:
\begin{proposition}If $B$ is given, in $O(n + h\log h)$ time one can find a path with $O(\opt\,\sqrt h\log h)$ links.\end{proposition}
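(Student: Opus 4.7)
The plan is to combine the lemma just proved with Lemma~\ref{lem:nh} for the approximation bound, and to separately account for the two sources of work in the running time. For the quality of the path, I would simply chain the inequalities: Lemma~\ref{lem:nh} gives $\apx=O(\opt\,\sqrt h)$ (this bound holds for the path found with modifications {\bf M\ref{mod1}}--{\bf M\ref{mod2}}, and its proof, which only uses the low-stabbing-number property of $B$, is unaffected by adding {\bf M\ref{mod3}} with $m=1$ and {\bf M\ref{mod4}}), and the preceding lemma gives $\apxl=O(\apx\cdot\log h)$. Substituting yields $\apxl=O(\opt\,\sqrt h\log h)$, which is the claimed approximation factor.

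For the running time, I would separate the cost of illumination itself from the cost of maintaining the colors. Under {\bf M\ref{mod3}} with $m=1$, every triangle becomes opaque the moment it is first lit by \emph{any} color; hence each of the $O(n)$ triangles of the triangulation of $P^\c$ is touched at most once across the entire illumination, and the total illumination work (including enumerating a triangle's neighbors across non-opaque sides and enqueueing them into the next stage's per-color front) is $O(n)$. For {\bf M\ref{mod4}}, I would store the currently active colors in a balanced binary search tree keyed by their integer value, so that at each stage the color-by-color processing in decreasing order amounts to an in-order traversal of the active colors in the tree. Each blocking event triggers a merge: two colors are deleted and one (their sum) is inserted, at cost $O(\log h)$. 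Because we start with $h+1$ colors and each merge reduces the count by one, the total number of merges is at most $h$, giving $O(h\log h)$ cumulative BST maintenance cost.

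Adding the two contributions gives the claimed $O(n+h\log h)$ time bound. The main obstacle I expect is making sure the color-by-color stage processing does not incur hidden overhead: naively, one might fear an $O(\apxl\cdot h)$ term from iterating over colors at every stage. I would address this by observing that it suffices to iterate only over colors that actually have a nonempty illumination front at the current stage (for instance, by keeping per-stage queues of ``active'' colors), so the per-stage work is bounded by the number of triangles newly lit that stage plus the number of merges performed, both of which have already been globally accounted for. Together with Lemma~\ref{lem:nh} and the preceding lemma, this establishes the proposition.
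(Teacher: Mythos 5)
Your proof is correct and follows the same route the paper takes: chain Lemma~\ref{lem:nh} ($\apx=O(\opt\sqrt h)$) with the preceding lemma ($\apxl=O(\apx\log h)$) to get the approximation factor, and account for the running time as $O(n)$ illumination (from the $m=1$ instance of Proposition~\ref{prop:ill}) plus $O(h\log h)$ for maintaining the color BST under at most $h$ merges. The extra care you take in ruling out a hidden $O(\apxl\cdot h)$ term by iterating only over colors with nonempty fronts is a valid implementation detail that the paper leaves implicit, but it does not change the argument.
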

\begin{remark}The improvement brought by the modification {\bf M\ref{mod4}} may look surprising: the original illumination description did not specify in what order the illumination is done using different colors; that is, the order was arbitrary. With the modification, the order is still quite arbitrary because the tie breaking is arbitrary (and ties may happen often). The reason the modification helps is that the illumination is done in an arbitrary, but nevertheless systematic and stage-to-stage consistent way.\end{remark}

\subsection{Computing the bridges}\label{sec:bridging}We describe the algorithm for computing bridges, following the exposition from \cite[Section~7]{pedestrian}. The technique, called ``well-known'' already in \cite{pedestrian}, dates back to \cite{manyppl} and is based on using \e{low-stabbing-number} spanning trees for point sets. To our knowledge, such trees were first described in \cite{welzl}; see also \cite[Section~5]{matousekBook} and~\cite{aronov,chazelleWelzl,polyPart}.
\begin{lemma}[\cite{welzl}]\label{lem:stab}Let $R$ be a set of $h$ points in the plane. In $O(h\sqrt h\log h)$ time one can compute a (non-self-crossing) spanning tree $T$ of $R$ such that any line in the plane intersects $O(\sqrt h)$ edges of~$T$.\end{lemma}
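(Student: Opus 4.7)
The plan is to reproduce the classical Welzl / Chazelle--Welzl construction via iterative perfect matchings. The main subroutine to establish is the following: given a set of $n$ points in the plane in general position, one can compute in $O(n\sqrt n\log n)$ time a perfect matching whose edges are stabbed $O(\sqrt n)$ times by every line. Given this subroutine, the tree is built in $O(\log h)$ phases: at phase $i$ the current point set has size $h_i \approx h/2^i$; compute a low-crossing matching of it, add its edges to $T$, and retain one representative per matched pair for the next phase. Every line stabs at most $O(\sqrt{h_i})$ edges in phase $i$, so summing the geometric series $\sum_{i\ge 0} O(\sqrt{h/2^i})$ yields a total stabbing number of $O(\sqrt h)$. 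The per-phase time $O(h_i\sqrt{h_i}\log h_i)$ telescopes to the claimed $O(h\sqrt h\log h)$ overall.

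The heart of the proof, and the main obstacle, is the low-crossing matching subroutine. I would use a multiplicative-weights argument on a representative test set $L$ of lines (for instance the $O(n^2)$ combinatorially distinct lines determined by pairs of points, or a small $\varepsilon$-approximation thereof). Initialize all weights to $1$. At each step, use a simplicial-partition argument (\`a la Matou\v{s}ek), or a Chazelle--Welzl $\varepsilon$-net lemma, to exhibit a pair of unmatched points whose connecting segment crosses test lines of total weight at most an $O(1/\sqrt n)$-fraction of the current total; add this segment to the matching and double the weights of all lines it crosses. A standard potential argument, comparing the geometric growth of an individual line's weight to the at-most-polynomial growth of the total weight, shows that after $n/2$ matching steps no individual line can have been doubled more than $O(\sqrt n)$ times, which gives the desired per-line stabbing bound.

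Finally, the lemma demands a non-self-crossing tree. I would handle this by a local uncrossing move: whenever two edges of the current tree cross in the plane, swap their endpoints so the replacement pair forms the other diagonal pairing of the induced quadrilateral. Each uncrossing strictly decreases the total Euclidean length of the tree edges, so the procedure terminates; and it cannot increase the number of edges stabbed by any line, because any line crossing both replacement segments must already have crossed both originals. The final tree is non-self-crossing with stabbing number still $O(\sqrt h)$, all within the stated $O(h\sqrt h\log h)$ time bound.
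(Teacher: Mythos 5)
This lemma is stated in the paper purely as a citation to Welzl, with no proof supplied, so there is no in-paper argument to compare against. Your sketch reproduces the classical construction from the cited literature (Welzl; Chazelle--Welzl; Matou\v{s}ek): iterated perfect matchings, a multiplicative-weights potential on a test set of $O(h^2)$ combinatorially distinct lines, and a geometric series over the $O(\log h)$ halving phases that telescopes both the stabbing number to $O(\sqrt h)$ and the running time to $O(h\sqrt h\log h)$. That is the right template, and your per-phase accounting is sound.

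Two points need tightening. First, ``at-most-polynomial growth of the total weight'' is off: over the $n/2$ matching steps the total weight grows by $\prod_i\bigl(1+O(1/\sqrt{n_i})\bigr)=\exp\bigl(O(\sqrt n)\bigr)$, which is superpolynomial; it is exactly this $\exp(O(\sqrt n))$ ceiling, compared against a single line's weight $2^{k_\ell}$, that yields $k_\ell=O(\sqrt n)$ -- a genuinely polynomial total weight would give $O(\log n)$, which is not what happens. Second, and more importantly, the uncrossing step is under-argued. Your structural observations are correct: swapping the crossing diagonals $\{ab,cd\}$ for the pair of opposite sides of the induced convex quadrilateral (chosen so that connectivity of $T$ is preserved) strictly decreases total Euclidean length, and a short case analysis shows that every line crosses at most as many of the two replacement segments as of the two originals, so the stabbing number cannot increase. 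But strict length decrease only gives termination, not a bound on the number of uncrossing iterations; moreover the replacement edges may create fresh segment--segment crossings with other tree edges, so the crossing count is not an obvious monovariant either. As written, the claim that the entire uncrossing pass fits within the $O(h\sqrt h\log h)$ budget is unsupported. You would need either to exhibit a polynomially bounded potential for the uncrossing, or to arrange the construction (for instance by a recursive ham-sandwich partitioning) so that the tree comes out non-crossing directly.
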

Recall that the goal of the bridging is to compute a set $B$ of $h$ line segments (the bridges) to connect up the holes and the boundary of $P$ into a (weakly) simple polygon.

We surround $P$ with a large bounding box ${\cal B}$ and compute, in time $O(n\log n)$, a full triangulation within ${\cal B}$, of $P$ and its holes, as well as the portion of ${\cal B}$ outside the outer boundary of $P$ (we think of this region as another ``hole''). Refer to Fig.~\ref{fig:bridging}, top left.
\begin{figure}\centering
\hfil\includegraphics[page=1,scale=.5]{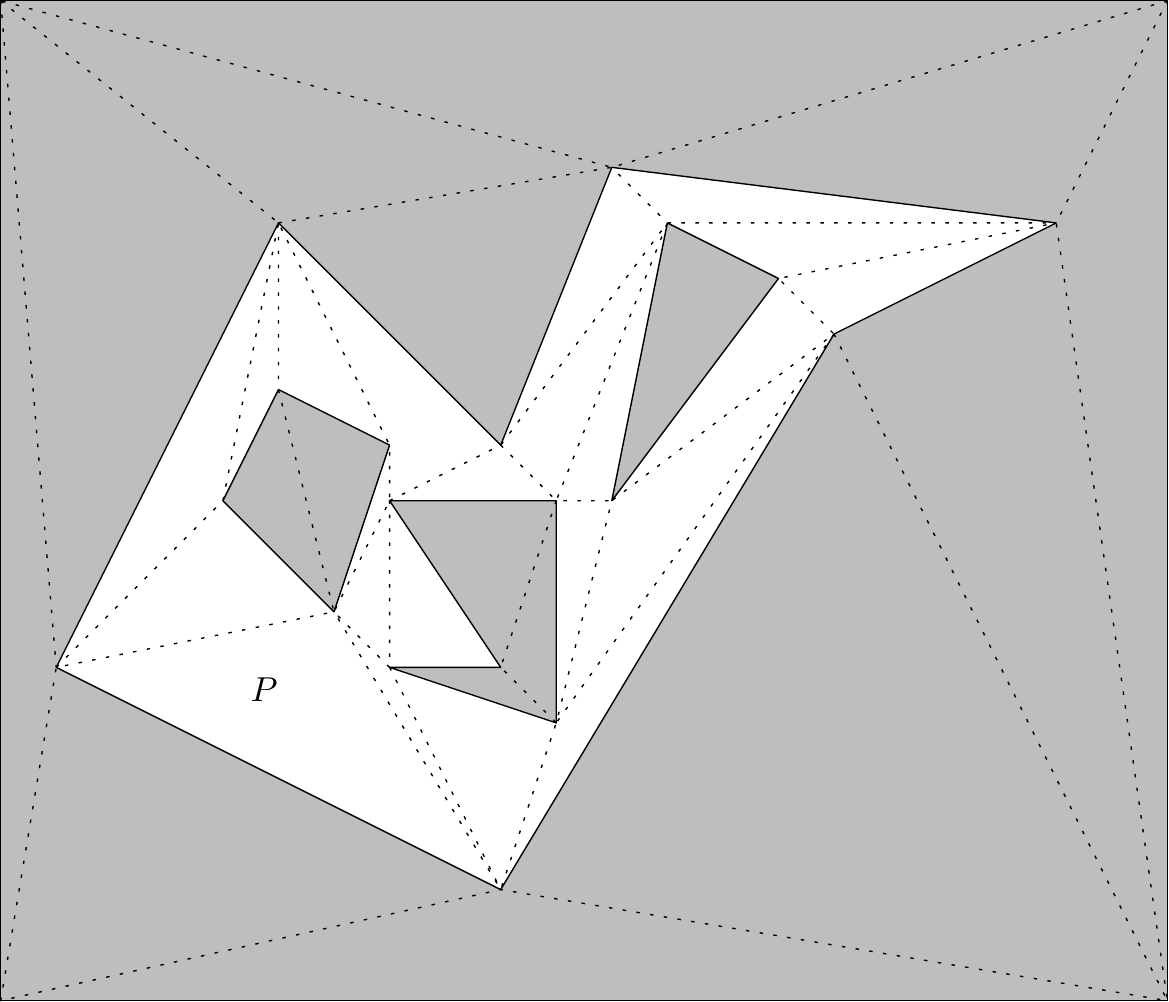}\hfil\includegraphics[page=2,scale=.5]{bridging}\hfil\\
\hfil\includegraphics[page=3,scale=.5]{bridging}\hfil\hfil\includegraphics[page=4,scale=.5]{bridging}\hfil
\caption{Top left: The triangulations (dotted). Top right: The spanning tree $T$ (thick edges) of the representative vertices (circles), and the intersections (squares). Bottom left: Each segment connects two different holes. Bottom right: The computed bridges.}\label{fig:bridging}\end{figure}

Next, pick one representative vertex from each hole ($h+1$ representatives altogether) and build the low-stabbing-number spanning tree $T$ of the representatives (the tree has $h$ edges). The edges of $T$ do not cross each other, but they may intersect boundaries of holes. To discover such intersection points, take each edge of $T$ and trace it using the triangulation. By Lemma~\ref{lem:stab}, each diagonal crosses $O(\sqrt{h})$ edges of $T$, so that the total number of triangles crossed by all edges of $T$ is $O(n\sqrt h)$. Thus, in total $O(n\sqrt h)$ time we can discover all of the intersection points. Refer to Fig.~\ref{fig:bridging}, top right. %(Another way to discover the intersections is by doing an output-sensitive algorithm for segments intersection detection?)

The intersection points subdivide the edges of $T$ into a total of $O(n\sqrt{h})$ smaller segments. Each segment either connects two boundary points of the same hole (the segment may lie fully inside the hole or fully outside it), or connects two different holes. Remove the former segments. Now each segment connects two different holes. Refer to Fig.~\ref{fig:bridging}, bottom left.

Examine the segments one by one, in an arbitrary order, and merge all holes into one (using a union-find data structure). That is, for each segment in turn, check whether it connects the boundary points of the same hole or of different holes. In the former case, remove the segment. In the latter case, keep the segment as one of the bridges in $B$ and merge the holes that it connects into one hole. Since we start with $h+1$ holes, $B$ will contain $h$ bridges when all holes are merged. Refer to Fig.~\ref{fig:bridging}, bottom right.

Finally, to obtain the (weakly) simple cut polygon $P^\c=P\setminus B$, slice $P$ along the bridges (each bridge becomes two edges of the cut polygon $P^\c$). It follows from the above that the cut polygon can be built in $O(n\log n+h\sqrt h\log h+n\sqrt h)$ time and that it has the desired property -- any line in the plane intersects any bridge $O(\sqrt h)$ times.

\section{\cori paths}\label{sec:cori}In this section $P$ is a \cori domain. To find a minimum-link \cori $s\-t$ path (and to construct a \cori link distance map from $s$) in $P$, we follow the general approach of \cite{aos}: Build $C$ trapezoidations of $P$, where the trapezoids in the trapezoidation $c\in C$ are obtained by extending maximal free $c$-oriented segments through vertices of $P$ (Fig.~\ref{fig:coriTraps}). Label the trapezoids with their link distance from~$s$. This way an $O(Cn)$-space structure is created that answers link distance queries in $O(C\log n)$ time. The labeling proceeds in $n$ \e{steps}, with label-$k$ trapezoids receiving their label at step $k$. Any label-$k$ trapezoid must be intersected by a label-$(k-1)$ trapezoid of a different orientation. Hence, step $k$ boils down to detecting all unlabeled trapezoids intersected by label-$(k-1)$ trapezoids. In terms of the \e{intersection graph} of trapezoids (which has nodes corresponding to trapezoids and edges corresponding to intersecting pairs of trapezoids), the labeling is done by Breadth First Search (BFS) starting from the $C$-oriented segments through~$s$.
\begin{figure}\centering\includegraphics{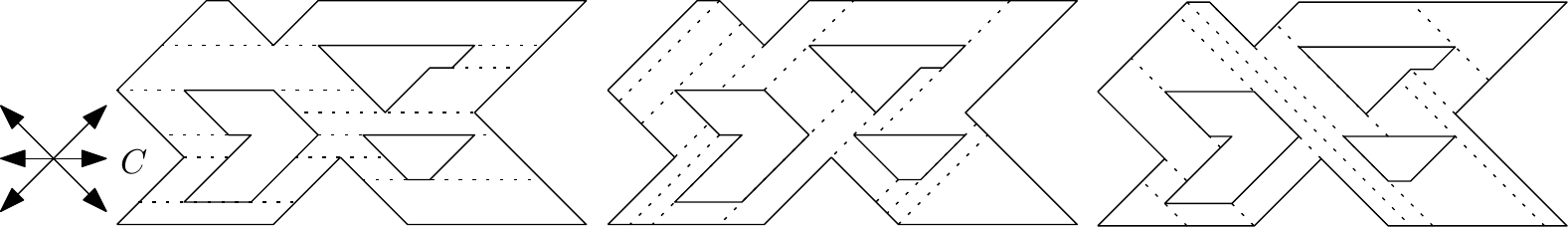}\\\vskip 10pt
%% xxx Joe: will the slight variations in color show up?  We can perhaps worry about this later, after acceptance!  (for the e-version, online, the colors are good)
\includegraphics[page=1]{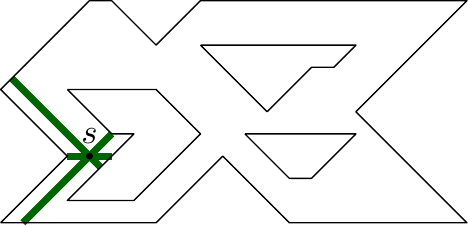}\hfil\includegraphics[page=2]{coriDist}\hfil\includegraphics[page=3]{coriDist}
\caption{Top: The trapezoidations. Bottom: The regions reachable with 1, 2 and 3 links.}\label{fig:coriTraps}\end{figure}

The idea of performing an efficient BFS in the intersection graph without explicitly building the (potentially quadratic-size) graph dates back to the work on minimum-link \e{rectilinear} paths \cite{surveylink,dn,lyw,ylw,ylwBends,ylwSIJCOMP,ohtsuki,sso,ia1,ia2,bergRect}. As noted already in \cite{ylw,dn}, Imai and Asano's \cite{ia1,ia2} data structure allows one to do the BFS---and hence to find a minimum-link rectilinear path---in $O(n\log n)$ time and space. Still, it was not until the 1991 work of Das and Narasimhan \cite{dn} (and the lesser known work of Sato, Sakanaka and Ohtsuki \cite{sso}) that an optimal, $O(n\log n)$-time $O(n)$-space BFS implementation was developed (see also Appendix~\ref{app:Lee}). In the implementation, one step of the BFS is reduced to a pair of sweeps---the UpSweep and the DownSweep.

The data structures employed in \cite{dn,sso} are simpler than the (much more general-purpose) structure of Imai and Asano \cite{ia1,ia2}, but they are still more complicated than one would hope for the basic problem of finding rectilinear paths amidst rectilinear obstacles.\footnote{The conference paper \cite{dn} admits omitting the details in several places; to our knowledge, no fuller version exists. In particular, we were not quite able to parse the description in \cite{dn}, in which the light propagates from ``portions'' of ``portions'' of ``outermost'' segments discovered at the previous step: according to the 5th paragraph on p.~265, at any step light is directed from ``portions'' of the fronts, but the fronts themselves seem to be ``portions'' of the ``outermost'' segments lit at the previous stage (4th paragraph).} In Section~\ref{sec:rect} we present a simplified implementation of the BFS step in the intersection graph. Our modification does not affect the asymptotic time and space optimality of the algorithm. The crux of the simplification is the use of a single tree for storing the intersection of the sweepline with the trapezoids that were lit at the previous step.

Another minor modification in our algorithm comes from performing only one (upward) sweep at any step of the BFS. The sweep starts from what we call the ``pot'' trapezoids---those into which the different-orientation trapezoids lit at the previous step are ``planted''. The planting is nothing but a means to initialize the sweep (without the planting, it is not clear how to efficiently discover even a single edge in the intersection graph). While for the rectilinear case our modification saves only a factor of 2 in running time, it serves as the basis of our improvements for the general case of $C$-oriented paths with $C>2$ (Section~\ref{subsec:cori}).
\begin{remark}In Section~\ref{sec:coriApx} we also use the original algorithm of \cite{dn}---with both the UpSweep and DownSweep---to give a more time-space efficient 2-approximation algorithm for minimum-link \cori paths.\end{remark}

\subsection{Rectilinear paths in rectilinear domains}\label{sec:rect}In this subsection $P$ is a rectilinear domain, and we want to build the rectilinear link distance map from $s$. For simplicity of exposition, we make a non-degeneracy assumption that no two edges of $P$ are supported by the same line; it is straightforward to handle the degenerate cases. As in \cite{dn}, we start by forming 2 decompositions of $P$: one by extending maximal free horizontal segments supported by horizontal edges of $P$, and the other by extending vertical segments. The horizontal decomposition is stored in a linked structure that links any cell to its upper neighbors. The vertical decomposition is stored analogously. Both decompositions are rectangular; however, to connect better to our techniques for general $C$-oriented paths (Section~\ref{subsec:cori}), we call the rectangles in the horizontal decomposition \e{trapezoids}, and those in the vertical decomposition we call \e{parallelograms}. The horizontal edges of a trapezoid are its \e{bases}, and the vertical edges are its \e{sides}. For a parallelogram, on the contrary, its vertical edges are bases, and horizontal edges are the sides. Any side is thus a subset of an edge of~$P$.

Our goal is to label the cells of the decompositions (i.e., trapezoids and parallelograms) with link distance from $s$. If a trapezoid (resp., parallelogram) has label $k$ then every point inside the trapezoid (resp., parallelogram) can be reached with a $k$-link path whose last link is horizontal (resp., vertical), and some point inside the trapezoid cannot be reached with fewer than $k$ links. As discussed above, the labeling amounts to BFS in the trapezoids--parallelograms intersection graph, starting from maximal free horizontal and vertical segments through $s$ (which are labeled with 1). Step $k$ of the BFS is as follows: given a subset $S_*^{k-1}$ of parallelograms (those labeled $k-1$), find all unlabeled trapezoids intersected by parallelograms in $S_*^{k-1}$. Denote the set of the sought trapezoids by $S^k$. (Strictly speaking, this is only half of the BFS step; the other step is analogous, with the roles of trapezoids and parallelograms reversed.) The idea is to discover the trapezoids from $S^k$, in the order of increasing $y$-coordinate of the lower bases, by sweeping a horizontal line upwards.

The remaining question is how to initialize and maintain the sweep event queue. We do not want to insert \e{all} unlabeled trapezoids into the queue (as it could lead to linear-size queues on a linear number of the BFS steps). On the other hand, the trapezoids from $S^k$ \e{must} appear in the step-$k$ queue at some point. Of course, ideally, \e{only} the trapezoids from $S^k$ should ever appear in the queue; we, however, did not see how to enforce this.

Instead, we insert a \e{superset} of $S^k$ into the queue. Every trapezoid $T$ inserted in the queue at step $k$ is either itself intersected by a parallelogram from $S_*^{k-1}$, or it has a lower neighbor intersected by a parallelogram from $S_*^{k-1}$ (or both). In the former case, the label of $T$ may not be smaller than $k-2$ and may not be larger than $k$. In the latter case, the label of $T$ is not larger than $k+2$ and not smaller than $k-4$ (Fig.~\ref{fig:rect}, left). That is, the label of any trapezoid queued at step $k$ belongs to the interval $[k-4,k+2]$. Viewed differently, this means that any trapezoid is processed during at most 7 steps (7 is not a tight bound, but all we need is that it is a constant). This gives an $O(n)$ bound on the total number of events over all $n$ BFS steps. Since there are $O(n)$ trapezoids, the claimed $O(n\log n)$ time and $O(n)$ space bounds for the algorithm follow once we show how to process an event in $O(\log n)$ time.

The correctness of the sweep follows from the standard invariant: all trapezoids from $S_k$ whose lower bases are below the sweepline are labeled. We now give the details of the sweep: its initialization, sweepline status and the events.
\begin{figure}\centering\includegraphics[scale=.9]{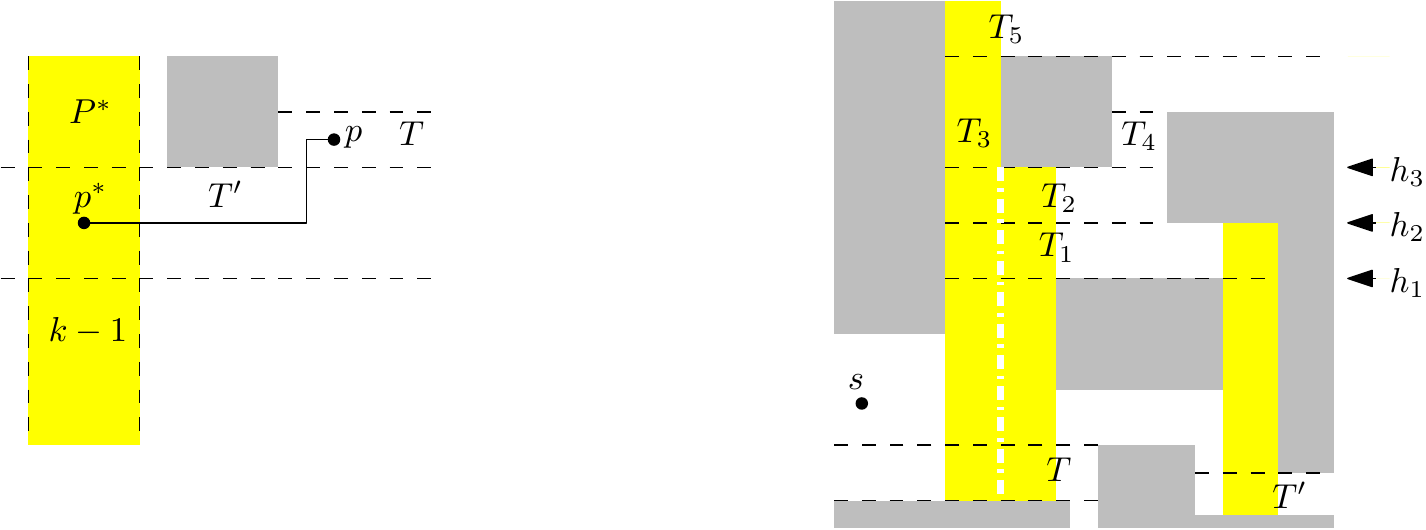}\caption{Left: If $T$ has a lower neighbor $T'$ intersected by a parallelogram $P^*\in S_*^{k-1}$, then from any point $p\in T$ there exists a 3-link path to some point $p^*\in P^*$. Thus, the label of $T$ is within 3 of $k-1$, the label of $P^*$. Right: Three parallelograms from $S_*^{2}$ are shown yellow. The two left parallelograms are planted into $T$, the rightmost parallelogram is planted into $T'$. Some of the events are as follows: At $h_1$, the trapezoid $T_1$ is the event. Since its lower base overlaps with intervals in the sweepline status, $T_2$ is inserted into the event queue. Since $T_1$ is dark, it gets its label~3. At $h_2$, the first event is the update of the sweepline status. The next event at $h_2$ is $T_2$: $T_3,T_4$ are inserted into the queue and $T_2$ is labeled. At $h_3$, first the sweepline status is updated; then, $T_3$ is processed: since it is intersected by the parallelograms from $S_*^{2}$ (as witnessed by the sweepline status intervals), $T_5$ enters the queue and $T_3$ is labeled. Finally, $T_4$ is processed: its lower base does not overlap intervals from the status, so it remains unlabeled.}\label{fig:rect}\end{figure}
\paragraph{Planting}We say that a parallelogram $P^*\in S_*^{k-1}$ is \e{planted} into a trapezoid $T$ if the lower side of $P^*$ overlaps with the lower base of $T$ (Fig.~\ref{fig:rect}, right). We say also that $T$ is the \e{pot} of $P^*$. We initialize the event queue by inserting all pots into it. The crucial observation is that each parallelogram is planted into exactly one pot (even though a pot can have many parallelograms planted side-by-side into it). Moreover, it is straightforward to augment the data structures storing the decompositions so that for any parallelogram its pot can be determined in $O(1)$ time (store with each parallelogram the horizontal edge of $P$ that supports its lower side, and store with each edge of $P$ the trapezoid whose lower base supports the edge).
\paragraph{Parallelogram events}The sweepline status is the intersection of the sweepline with the (interiors of) parallelograms from $S_*^{k-1}$. The status thus is a set of disjoint (open) intervals. The status changes at \e{parallelogram events} when the sweepline reaches sides of parallelograms. If the event is an upper side, the side is removed from the status. If the event is a lower side, it is added to the status (the parts of intervals already in the status overlapping with the newly added interval are removed). The total number of parallelogram events over all BFS steps is $O(n)$ since each label-$(k-1)$ parallelogram contributes 2 events at step $k$. Because the intervals in the status are disjoint, we can keep them in any ordered structure, e.g., a balanced binary search tree indexed by left endpoints of the intervals (the more complicated data structures in \cite{dn} were utilized to maintain ``fronts'' composed of merging and splitting ``windows''). Clearly, the tree handles any of the following three operations in amortized $O(\log n)$ time: (1)~adding an interval (we do not merge the intervals into superintervals because for backtracking we need to know from which parallelogram a trapezoid was lit), (2)~removing part of an interval that hits an obstacle edge, and (3)~checking whether any of the intervals overlaps with a given query interval. In the last operation, the query interval is a trapezoid lower base. We need it for the trapezoid events, described next.
\paragraph{Trapezoid events}The main events in the sweep are \e{trapezoid events} that occur when the sweepline reaches a lower base of a trapezoid (some of the trapezoid events happen simultaneously with parallelogram events; in this case parallelogram events take priority). Suppose that a trapezoid $T$ is the event. We check whether the lower base of $T$ is intersected by the intervals in the sweepline status. If yes, we insert all (at most 2, due to non-degeneracy) upper neighbors of $T$ into the event queue (recall that to facilitate the insertion, the upper neighbors are linked from $T$). In addition, if $T$ is unlabeled, we label it with $k$. Refer to Fig.~\ref{fig:rect}.

\subsection{$C$-oriented paths in \cori domains}\label{subsec:cori}Let $P$ be a $C$-oriented domain. Our goal is to build the $C$-oriented link distance map from $s$. As noted in \cite{aos}, the efficient methods developed to perform BFS in the trapezoids intersection graph for the rectilinear version do not extend to $C$-oriented paths when $C>2$. Let us look closely at why this is the case. One reason is that for $C>2$ some trapezoids may get labeled only partially during a BFS step (Fig.~\ref{fig:split}). This complicates the BFS because the intersection graph changes from step to step, and, in the final link distance map, trapezoids may get split into subtrapezoids. The partial labeling and splitting are due to the possibility that two different-orientation trapezoids do not ``straddle'' each other; instead they both may be ``flush'' with an obstacle edge whose orientation is different from the orientations of both trapezoids (this was not the case in the rectilinear version, since there were only 2 orientations). However, such a flush intersection can be read off easily from lists of incident trapezoids stored with each edge of~$P$. Thus, discovering partially labeled trapezoids becomes the easy part of the algorithm. For the remaining part we do a sweep for each pair of orientations in~$C$.\begin{figure}\centering\includegraphics[width=.3\columnwidth,page=2]{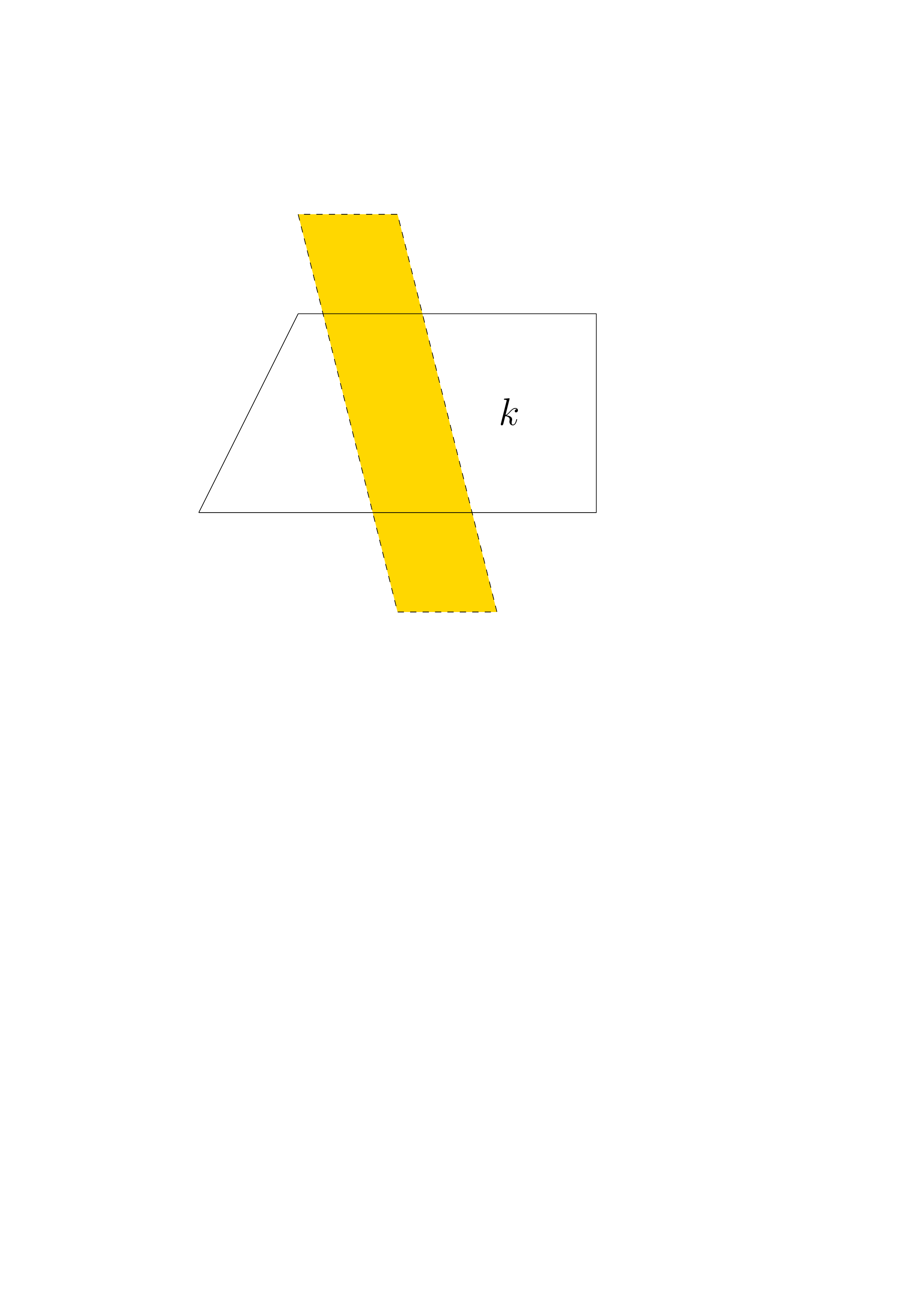}\hfil\includegraphics[width=.3\columnwidth,page=4]{split}\caption{Left: A trapezoid gets fully labeled. Right: A trapezoid is labeled only partially at step~$k$.}\label{fig:split}\end{figure}

Specifically, after the partially labeled trapezoids are processed, we are left with discovering unlabeled trapezoids ``fully straddled'' by trapezoids labeled at the previous step. As with the rectilinear case, it is the straddling that leads to a superlinear-size intersection graph and makes a subquadratic algorithm less trivial. It is tempting to reuse here the sweeping techniques developed for the rectilinear version. The stumbling block, though, is choosing the direction of the sweep. Indeed, no matter in which direction the sweep proceeds, the intersection of the sweepline with a trapezoid does not change only at discrete ``events'': while the sweepline intersects a non-parallel side of the trapezoid, the intersection changes continuously. The good news is that this is the \e{only} reason for a continuous change of the intersection. This prompts us to get rid of the non-parallel sides of the trapezoids by clipping them into parallelograms, with the new sides parallel to the sweepline. After the clipping (and planting the parallelograms appropriately) is done, we are able to reuse the rectilinear-case machinery and finish the BFS step with $C(C-1)$ sweeps---one per (ordered) pair of orientations.
Overall we obtain an $O(C^2n\log n)$-time $O(Cn)$-space algorithm.

The main difference between our algorithm and that of \cite{aos} is the separate treatment of flush and straddling trapezoids. This allows us to use only elementary data structures and improve the time-space bounds to $O(C^2n\log n)$ and $O(Cn)$. We now describe the details.

\subsubsection*{Definitions and preliminary observations}We start by defining the notation and recalling some results from \cite{aos}. %We still work under the non-degeneracy assumption that no two edges of $P$ are supported by the same line; lifting the assumption is again straightforward.
Any trapezoid $T$ has two opposite edges belonging to the boundary of $P$; these edges are \e{sides} of $T$. The other two edges are $T$'s \e{bases}; the bases are parallel segments whose orientation belongs to $C$. For an orientation $c\in C$, a \e{$c$-segment} is a segment with orientation~$c$. A \e{$c$-trapezoid} $T$ has $c$-segments as bases. A \e{subtrapezoid} of $T$ is a $c$-trapezoid cut out from $T$ by one or two $c$-segments.
\paragraph{$c$-distance}A \e{$c$-path} is a path (starting from $s$) whose last link is a $c$-segment. A point $p\in P$ is at \e{$c$-distance} $k$ from $s$ if $p$ can be reached by a $k$-link $c$-path (but not by a $(k-1)$-link $c$-path).
\paragraph{The output: $c$-maps}The $c$-distance equivalence decomposition of $P$ (the \e{$c$-map}) is the partition of $P$ into $c$-trapezoids such that the $c$-distance to any point within a cell is the same. If the $c$-distance to points in a $c$-trapezoid of the $c$-map is $k$, then the $c$-trapezoid has \e{label} $k$. Using the illumination analogy we also say that the trapezoid is \e{lit} at step $k$. Unlit trapezoids are \e{dark}. Denote the set of $c$-trapezoids lit at step $k$ by $S_c^k$.
\paragraph{Convention: $c$ is horizontal and implicit}In our algorithm, finding $S_c^k$ is completely identical to (and independent from) finding $S_{c^*}^k$ for any $c^*\in C\setminus c$. In what follows we focus on finding $S_c^k$. Where it creates no confusion, we omit the subscript $c$ and the prefix ``$c$-''; e.g., ``path to trapezoid in $S^{k-1}$'' means ``$c$-path to $c$-trapezoid in $S_c^{k-1}$'', etc. We let $C^*$ denote $C\setminus c$, and use $c^*$ for a generic orientation from $C^*$. Assume, without loss of generality, that $c$ is horizontal.
\paragraph{$c$-map via trapezoidation refinement}Denote by $D^k$ the ``at-most-$k$-links map'', i.e., the trapezoidation whose trapezoids are of $k+1$ types---dark trapezoids, and trapezoids lit at steps $1,\dots, k$. All trapezoids in $D^k$ are maximal --- the dark trapezoids are maximal in the sense that each is a maximal trapezoid every point of which has distance $k+1$ or larger, and the lit trapezoids are the same as they are in the final map (i.e., trapezoids lit at step $k$ are exactly $S^k$). By definition, lit trapezoids from $D^{k-1}$ remain the same in $D^k$; in particular, $D^n$ is the $c$-map. Let $T'$ be a dark trapezoid from $D^{k-1}$. The crucial (albeit obvious) observation about minimum-link paths is (cf.\ \cite{suri,dn,aos}):
\begin{obs}\label{obs:main}There exists a $k$-link path to a point $p\in T'$ if and only if there exists a $(k-1)$-link $c^*$-path $\pi^*$ to some point $q\in T'$ that has the same $y$-coordinate as $p$.\end{obs}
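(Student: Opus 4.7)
The plan is to prove both implications by exploiting the single structural fact that $T'$ is a maximal $c$-trapezoid: any horizontal segment contained in $P$ and meeting $T'$ must lie entirely inside $T'$. This is because the non-horizontal sides of $T'$ are supported by obstacle edges of $P$, and the bases of $T'$ are horizontal (so a horizontal segment at a different $y$-coordinate cannot cross them).

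For the direction $(\Rightarrow)$, I would take an arbitrary $k$-link $c$-path $\pi$ to $p$ and look at its last link $[q,p]$, which is horizontal (since $\pi$ is a $c$-path). By the structural observation, $q\in T'$, and obviously $q$ has the same $y$-coordinate as $p$. Let $\pi^*$ be the $(k-1)$-link prefix of $\pi$ that ends at $q$; I need to argue its last link has orientation in $C^*$, making $\pi^*$ a $c^*$-path. If, to the contrary, its last link were also horizontal, then it would share the endpoint $q$ with the horizontal segment $[q,p]$ and therefore lie on the same horizontal line; the two consecutive horizontal links could be merged into a single horizontal link, yielding a $c$-path to $p$ with at most $k-1$ links. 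But this contradicts $T'$ being dark in $D^{k-1}$, since no point of $T'$ can be reached by a $c$-path of fewer than $k$ links. Hence $\pi^*$ is the required $(k-1)$-link $c^*$-path to $q\in T'$.

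For the direction $(\Leftarrow)$, I would take the given $(k-1)$-link $c^*$-path $\pi^*$ to $q\in T'$, and append the horizontal segment $[q,p]$. Since $q$ and $p$ both lie in $T'$ and share a $y$-coordinate, the structural observation again implies $[q,p]\subseteq T'\subseteq P$, so the extension is a legitimate path. Because the last link of $\pi^*$ has orientation $c^*\neq c$, the appended horizontal link genuinely adds one link, producing a $k$-link $c$-path to $p$.

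The only place where something substantive is needed is the structural observation about horizontal segments staying inside $T'$, and, in the forward direction, the small merging argument that uses darkness of $T'$ to rule out a horizontal penultimate link. Everything else is bookkeeping about link counts and orientations.
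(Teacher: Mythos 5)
The paper does not actually prove this observation; it declares it ``crucial (albeit obvious)'' and cites earlier work. Your argument is a correct, self-contained elaboration of exactly what the paper treats as immediate. The structural fact you isolate---that the non-horizontal sides of any trapezoid $T'$ in $D^{k-1}$ lie on obstacle edges of $P$ (since $D^{k-1}$ refines $D^0$ only by horizontal $c$-cuts), so a horizontal chord of $P$ meeting the interior of $T'$ stays inside $T'$---is precisely what both directions need: it pins $q$ to $T'$ in the forward direction and legitimizes the appended horizontal segment $[q,p]\subset T'\subset P$ in the backward direction. The merging argument in the forward direction, using darkness of $T'$ in $D^{k-1}$ to rule out a horizontal penultimate link (otherwise two consecutive collinear horizontal links would collapse into a $(k-1)$-link $c$-path into $T'$), is the right way to ensure the $(k-1)$-link prefix is genuinely a $c^*$-path. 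The remaining degenerate configurations (e.g., $q=p$, or $p$ exactly on a base of $T'$) are standard corner cases covered by the paper's non-degeneracy convention and do not affect the substance of your proof.
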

If $\pi^*$ enters $T'$ through its lower base, then any point whose $y$-coordinate is smaller than the $y$-coordinate of $p$ is also reachable by a $k$-link path. Thus, the set of points of $T'$ reachable by $k$-link paths entering $T'$ through its lower (resp., upper) base is a subtrapezoid, $T'_l$ (resp., $T'_u$), of $T'$ (cf.\ \cite[Fig.~5]{aos}). If $T'_l\cup T'_u=T'$, we say that $T'$ is \e{fully} lit; otherwise, it is \e{partially} lit. Note that in the latter case the subtrapezoid $T'\setminus(T'_l\cup T'_u)$ will be fully lit in $D^{k+1}$, since orientations of edges of $P$ belong to $C$. Thus, $D^k$ is a refinement of $D^{k-1}$, and overall we have:
\begin{obs}\label{obs:CoriDomain}No trapezoid from $D^0$ is split into more than 3 trapezoids in $D^n$.\end{obs}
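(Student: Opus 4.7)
The strategy is to track the evolution of a fixed trapezoid $T'\in D^0$ through the refinement chain $D^0,D^1,\dots,D^n$ and bound the number of distinct $c$-distance labels that can appear inside $T'$. I would first note that, by construction, $D^k$ only refines $D^{k-1}$ (cells split but never merge), so it suffices to pinpoint the first step at which any point of $T'$ becomes lit and to control every subsequent splitting of $T'$ from that step onwards.

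Let $k_0$ be the first step at which some point of $T'$ is labeled. For $k<k_0$, the trapezoid $T'$ remains a single dark cell in $D^k$. By Observation~\ref{obs:main}, the subset of $T'$ lit at step $k_0$ is exactly $T'_l\cup T'_u$. If $T'_l\cup T'_u=T'$, then $T'$ is fully lit at step $k_0$, becomes a single cell of label $k_0$ in $D^{k_0}$, and is never split again---so $T'$ contributes one cell to $D^n$. Otherwise $T'_m:=T'\setminus(T'_l\cup T'_u)$ is a nonempty dark subtrapezoid, and in $D^{k_0}$ the trapezoid $T'$ is partitioned into exactly three subtrapezoids: $T'_l$ and $T'_u$ (both labeled $k_0$) and the dark $T'_m$.

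To finish, I would invoke the preceding observation that $T'_m$ is fully lit in $D^{k_0+1}$---a consequence of $P$'s edges being $C$-oriented. Together with the lower bound $c$-dist $\ge k_0+1$ forced by $T'_m$ being dark in $D^{k_0}$, this yields $c$-dist $= k_0+1$ uniformly on $T'_m$, so $T'_m$ becomes a single cell labeled $k_0+1$ in $D^{k_0+1}$. Because lit cells are never split further, the partition $\{T'_l,T'_u,T'_m\}$ of $T'$ persists in $D^n$, giving at most three cells as claimed. The substantive step in this chain is not the bookkeeping above but the preceding geometric claim that $T'_m$ becomes fully lit one step after $T'$ is first touched: that is where the $C$-orientation of the sides of $T'$ (whose orientations lie in $C^*$) is essential, since it allows extending a path along a side of $T'$ by a single $c^*$-link to cover every $y$-coordinate spanned by $T'_m$.
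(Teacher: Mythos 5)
Your argument is correct and follows essentially the same route as the paper: identify the first step $k_0$ at which $T'$ is touched, split $T'$ into the lit parts $T'_l, T'_u$ entering from the two bases, and observe that the dark remainder $T'_m$ is fully lit one step later because the sides of $T'$ are themselves $C$-oriented, so a path can bounce along a side with a single $c^*$-link to reach any $y$-coordinate of $T'_m$. The only minor point to keep in mind is that $T'_l$ or $T'_u$ may be empty (giving fewer than three pieces), which your ``at most three'' conclusion already accommodates.
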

The trapezoidation $D^0$ is just the trapezoidal decomposition of $P$ with maximal free segments supported by vertices of $P$. Because $D^0$ is linear-size, by Observation~\ref{obs:CoriDomain} so is $D^n$, the $c$-map. To answer a link distance query from a point $q\in P$ to $s$, it is sufficient to locate $q$ in each of the $c$-maps and choose one in which the trapezoid of $p$ has the smallest label.

The above discussion reestablishes the space and query time results of \cite{aos}: a size-$O(Cn)$ structure can be built to answer link distance queries in $O(C\log n)$ time. Two approaches were presented in \cite{aos} to construct the $c$-maps: one using $O(C^2n\log n)$ time and space, the other using $O(C^2n\log^2 n)$ time and $O(C^2n)$ space. We construct the $c$-maps in $O(C^2n\log n)$ time and $O(Cn)$ space. Our general approach is still the same as in \cite{aos}: at step $k=1,\dots, n$, starting from $D^{k-1}$, build $D^k$ by identifying $S^k$---the trapezoids lit at step $k$. We find the lit trapezoids differently from \cite{aos}, without employing complicated data structures. The details follow below.
\subsection*{Intersection types}\label{subsec:int}Restated in our terms, Observation~\ref{obs:main} means that a dark trapezoid $T'\in D_c^{k-1}$ gets fully or partially lit at step $k$ if and only if it is intersected by some (different-orientation) trapezoid $T^*\in S_{c^*}^{k-1}$ lit at step $k-1$. We distinguish between two types of trapezoids intersection (Fig.~\ref{fig:types}).
\begin{definition}\label{def:flush}$T',T^*$ are \e{flush} if a side of $T'$ overlaps with a side of $T^*$. We say that $T'$ is (fully or partially) \e{flush-lit} by $T^*$.\end{definition}
A flush trapezoid $T'$ is lit fully or only partially depending on whether its side fully or only partially overlaps with the sides of the trapezoids in $\bigcup_{c^*}S_{c^*}^{k-1}$ that are flush with $T'$.
\begin{definition}\label{def:straddle}$T',T^*$ \e{straddle} each other if both bases of $T'$ intersect both bases of $T^*$. We say that $T'$ is (fully) \e{straddle-lit} by $T^*$.\end{definition}
In particular, if a side of $T^*$ overlaps with a base of $T'$ or vice versa, then $T',T^*$ are counted as straddling, not as flush.
\begin{figure}\centering\includegraphics{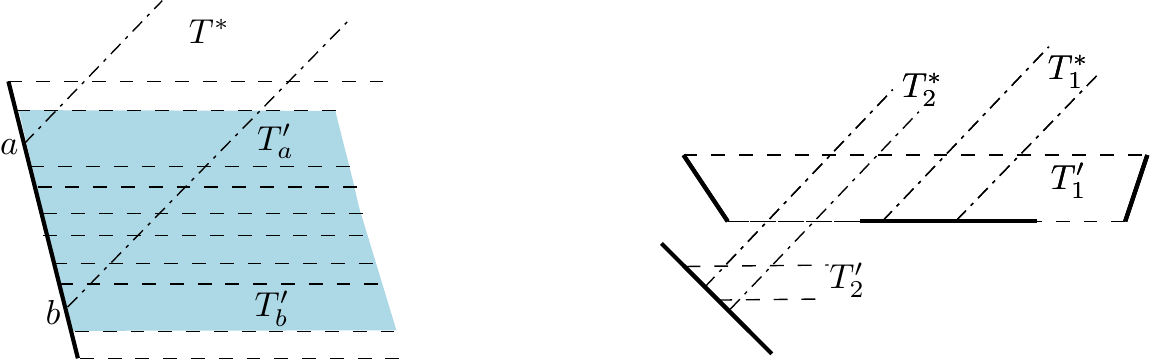}\caption{Intersection types. Left: $T^*$ is flush with the shaded trapezoids. $T'_b$ will be split in $D^k$ unless more of it is lit by another trapezoid. $T_a$ will be the pot for the parallelogram cut out of $T^*$ by the $c$-segment through~$a$. Right: $T^*_1,T^*_2$ straddle $T'_1$. The parallelogram cut out of $T^*_1$ is planted into $T'_1$. The pot for the parallelogram cut out of $T^*_2$ is a trapezoid $T'_2$ flush with~$T^*_2$.}\label{fig:types}
\end{figure}

Note that flush intersection and straddling are the only possible ways for two trapezoids from $D_c^{k-1},D_{c^*}^{k-1}$ to intersect: because vertices and sides of trapezoids belong to edges of $P$, if one trapezoid has a vertex inside the other, then the vertex is on an edge of~$P$ and the trapezoids are flush. (The flush intersection may be degenerate when the trapezoids just share a common vertex.) Also note that a straddle-lit trapezoid is necessarily fully lit.

With Definitions~\ref{def:flush} and~\ref{def:straddle}, step $k$ of the algorithm can be completed as follows: Find dark $c$-trapezoids flush with trapezoids from $S_{c^*}^{k-1}$, and (fully or partially) light them: check for each found flush trapezoid $T'$ whether it is flush-lit fully or partially. For each partially flush-lit trapezoid determine the dark subtrapezoid $T''\subset T'$, and label with $k$ the (sub)trapezoid(s) that form $T'\setminus T''$. After this has been done for all $c^*\in C^*$, i.e., after all flush trapezoids are processed, any dark trapezoid will either fully remain dark in $D^k$ or will be fully straddle-lit (i.e., there will be no more partial lighting and splitting). So what remains is to straddle-light $c$-trapezoids. For that we clip each $c^*$-trapezoid $T^*\in S_{c^*}^{k-1}$ to a $(c,c^*)$-parallelogram $P^*$, using $c$-segments going through vertices of $T^*$. We plant $P^*$ into the $c$-trapezoid $T'$ to which the lower base of the parallelogram belongs and do a sweep analogous to the rectilinear case to discover the trapezoids intersected by the parallelograms.

The clipping-planting-sweeping is repeated for each $c^*\in C^*$, i.e., overall, to straddle-light the $c$-trapezoids in $S_c^k$, we perform $C-1$ sweeps, one per $c^*\in C^*$. We proceed to a detailed description of the flush- and straddle-lightings.
\subsection*{Flush-lighting}\label{app:flush}Sides of trapezoids belong to edges of $P$. We say that an edge $e$ \e{supports} a trapezoid if one of its sides belongs to $e$. We maintain the ordered list $L_c(e)$ of $c$-trapezoids supported by $e$. Similarly, we store with each trapezoid the 2 edges of $P$ that support its sides. The flush-lighting is then done as follows: For every $c^*$-trapezoid $T^*\in S_{c^*}^{k-1}$ and each edge $e$ that supports $T^*$, locate the vertices $a,b$ of $T^*$ (lying on $e$) in the list $L_c(e)$. All (dark) trapezoids lying between $a$ and $b$ are labeled $k$. One of the trapezoids $T'_a,T'_b$ containing $a,b$ in the interior of the side is marked to be split, at the end of flush-lighting, by a horizontal cut through $a$ or $b$, unless more of the trapezoid is flush-lit by another trapezoid. Refer to Fig.~\ref{fig:types}, left.
\subsection*{Straddle-lighting}\label{app:straddle}Now that all flush-lit trapezoids have been found and lit, we have to find straddle-lit trapezoids. Fix $c^*\in C^*$.

Clip each $c^*$-trapezoid $T^*\in S_{c^*}^{k-1}$ to the parallelogram $P^*$ using horizontal lines through vertices of $T^*$. Denote the set of the obtained parallelograms by $S^\c$. Any $c$-trapezoid straddled by $T^*$ is also straddled by $P^*$, and thus straddle-lighting with $c^*$-trapezoids is equivalent to finding dark trapezoids intersected by parallelograms from $S^\c$. This can be accomplished with a sweep, called the \e{$(c,c^*)$-sweep}, similarly to the case $C=2$ from Section~\ref{sec:rect}. Below we describe the few differences.

Some $c^*$-trapezoids from $S_{c^*}^{k-1}$ (such as, e.g., trapezoid $T_1^*$ from Fig.~\ref{fig:types}, right) have lower bases supported by $c$-edges of $P$. Planting the parallelograms cut out from such trapezoids is identical to the rectilinear case---the pots are read off directly from the trapezoidations $D_c,D_{c^*}$ augmented with the little auxiliary information as described in Section~\ref{sec:rect}. The rest of the trapezoids from $S_{c^*}^{k-1}$ (such as, e.g., trapezoid $T_2^*$ from Fig.~\ref{fig:types}, right, or $T^*$ from Fig.~\ref{fig:types}, left) are flush with trapezoids from $D_c^{k-1}$. The pot $T'$ for the parallelogram $P^*$ cut out from such a trapezoid $T^*$ can be determined from the list $L_c(e)$, where $e$ is the edge supporting $T^*$ and $T'$: all that is needed is to locate in which trapezoid from the list the vertex of $T^*$ lands (and if the vertex of $T^*$ is a common vertex of two $c$-trapezoids, then the upper of the two is chosen as the pot). Note that if $T^*,T'$ are flush, then $P^*$ may not be planted into $T'$ ``all the way to the bottom'', and so the parallelogram event corresponding to the lower base of $P^*$ may not coincide with any trapezoid event; this means that the sweepline does not intersect $P^*$ at the time when $T'$ is the event, and hence the pot is not straddle-lit. This is not a problem because the pot is flush-lit by $T^*$ anyway, and no other $c$-trapezoid is intersected by $P^*$ before the parallelogram comes out of the pot (by which time the intersection of the sweepline with $P^*$ is already present in the sweepline status). Thus, the only fix to the sweep that we need is unconditional insertion of the (upper) neighbors of $T'$ into the event queue, even when the lower base of $T'$ does not intersect the sweepline-status intervals.

The final, minor difference from the rectilinear case is that the parallelograms in $S^\c$ are not vertical. To account for this, every operation on the sweepline status is preceded by a horizontal shift of the processed interval. Specifically, recall that the operations supported by the interval tree are insertion, deletion, and query of an interval. If the operation is performed when the sweepline is at height $h$, we shift the interval by $h/\tan\alpha$ before the operation, where $\tan\alpha$ is the slope of $c^*$.

We emphasize that clipping by the $c$-segments is done only to find $c$-trapezoids straddle-lit by $c^*$-trapezoids. After the $(c,c^*)$-sweep completes, the $c^*$-trapezoids are ``unclipped'' back to what they were (and in general, during a $(c_1,c_2)$-sweep, $c_2$-trapezoids lit at the previous step are only temporarily clipped into $(c_1,c_2)$-parallelograms using $c_1$-segments through the vertices).
\subsection*{Analysis}\label{cori:anal}Any of the $O(Cn)$ trapezoids in the final $c$-maps is either flush-lit or straddle-lit. Flush-lighting takes overall $O(C^2n\log n)$ time. Indeed, for every trapezoid $T^*$ that flush-lights $c$-trapezoids through an edge $e$, it takes $O(\log n)$ time to locate the vertices $a,b$ of $T^*$ (lying on $e$) in the list $L_c(e)$. Overall there are $O(Cn)$ trapezoids $T^*$, and for each we have to locate the vertices $a,b$ in the $C-1$ lists $L_c(e)$. Thus the locating takes overall $O(C^2n\log n)$ time. After the vertices $a,b$ have been located, it takes $O(n_e)$ time to label each (dark) trapezoid $T'$ supported by $e$ (here $n_e$ is the number of the trapezoids that are flush with $T^*$). Again, overall there are $O(Cn)$ trapezoids $T'$, and each can be flush-lit from at most $C-1$ directions. Thus the total time spent in the labeling (not counting the time spent in locating the vertices $a,b$) is $O(C^2n)$.

As for straddle-lighting, by the non-degeneracy assumption, any trapezoid has $O(1)$ neighbors. Thus, processing an event during any of the sweeps involves a constant number of priority queue and/or interval tree operations, i.e., $O(\log n)$ time per event. To bound the number of events, observe that exactly as in the rectilinear case, any trapezoid inserted in the event queue at step $k$ is either itself intersected by a parallelogram from $S^\c$, or has a lower neighbor intersected by a parallelogram from $S^\c$. Thus, just as in the rectilinear case, any trapezoid enters the event queue on at most 7 consecutive BFS steps. At any step $k$, a $c$-trapezoid may appear in the event queue during each of the $C-1$ $(c,c^*)$-sweeps. Thus, since there are $O(Cn)$ trapezoids, we have $O(C^2n)$ events, and the total running time of straddle-lighting is $O(C^2n\log n)$.

As for the space, the interval tree uses $O(n)$ space at a single sweep; moreover, because we do the sweeps in different directions independently, we never need more than a single interval tree during the execution of the algorithm. The dominating term is thus the storage of the $C$ trapezoidations, each requiring $O(n)$-space.
\begin{theorem}\label{thm:cori}An $O(Cn)$-size data structure can be built to answer $C$-oriented link distance queries in $C$-oriented domains in $O(C\log n)$ time; a minimum-link path can be output in additional time proportional to the distance. The preprocessing time and space are $O(C^2n\log n)$ and $O(Cn)$.\end{theorem}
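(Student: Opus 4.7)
The plan is to assemble the theorem from the ingredients that have already been set up in the section, so the proof is mostly bookkeeping rather than new work. First I would address the output structure and its size: by Observation~\ref{obs:CoriDomain}, each $c$-map refines the initial trapezoidation $D^0$ by splitting any trapezoid into at most three subtrapezoids, and since $D^0$ has $O(n)$ complexity, each $c$-map has $O(n)$ cells; summing over the $C$ orientations gives the claimed $O(Cn)$ storage bound for the data structure.

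Next I would describe the query procedure. Given a query point $q$, locate $q$ in each of the $C$ $c$-maps using a standard planar point-location structure built per map in $O(n)$ additional preprocessing and supporting $O(\log n)$ queries; return the minimum label found over the $C$ locations, which is exactly the $C$-oriented link distance from $s$ to $q$ by definition of the $c$-maps. That gives the $O(C\log n)$ query time. For outputting a minimum-link path, I would rely on the back-pointers that the construction naturally deposits (every labeled trapezoid remembers the labeled trapezoid of different orientation that lit it, flush or straddle), so walking the back-pointers from $q$ to $s$ takes time proportional to the number of links reported.

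For the preprocessing bounds I would simply restate the two-part analysis from Subsection~\ref{subsec:cori}: flush-lighting is charged by locating the two endpoints of each of $O(Cn)$ lit trapezoids $T^*$ in each of the $C-1$ supporting-edge lists $L_c(e)$ for a total of $O(C^2n\log n)$ time, plus $O(C^2n)$ work to label each flush-lit neighbor once per orientation pair; straddle-lighting consists of $C(C-1)$ sweeps over the BFS steps, and the key invariant (inherited from the rectilinear argument of Section~\ref{sec:rect}) is that any trapezoid appears in any one sweep's event queue during at most seven consecutive BFS steps, so each sweep contributes $O(n)$ events totalled over all steps, giving $O(C^2 n)$ events overall, each handled in $O(\log n)$ time by the interval tree.

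The main obstacle I anticipate is justifying that the ``seven-step lifetime'' bound really transfers to the $C$-oriented setting, since after clipping a $c^*$-trapezoid to a $(c,c^*)$-parallelogram the ``pot'' of the parallelogram is in general determined by a flush edge rather than by a lower neighbor in the rectilinear sense; here I would lean on the pot lookup described in the straddle-lighting subsection (using the augmented $L_c(e)$ lists) to argue that a trapezoid can still only be queued in response to an event at its lower base or at the lower base of a neighbor, so the same constant-bounded lifetime argument goes through. Finally, for space I would note that the $C(C-1)$ sweeps are performed sequentially in different direction pairs, so at most one interval tree of size $O(n)$ is alive at a time, and the dominant term is the $O(Cn)$ storage of the $C$ trapezoidations, completing the bound.
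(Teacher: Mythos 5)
Your proposal mirrors the paper's own argument step for step: the $O(Cn)$ size and $O(C\log n)$ query time come from Observation~\ref{obs:CoriDomain} plus point location in each $c$-map, the $O(C^2n\log n)$ preprocessing is split into the flush-lighting locate-and-label charge and the straddle-lighting sweeps with the constant-lifetime (seven-step) event bound, and the $O(Cn)$ space follows because the sweeps are performed one orientation pair at a time so only one interval tree is ever alive. The one point you flag as a potential obstacle — transferring the seven-step lifetime bound to pots determined by flush edges rather than by lower neighbors — is exactly what the paper resolves by the unconditional insertion of the pot's upper neighbors into the event queue, and your handling of it is consistent with theirs.
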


\subsection{Extensions}We apply our methods to compute minimum-link $C$-oriented paths in arbitrarily oriented domains, to give a faster 2-approximation algorithm for finding minimum-link \cori paths, and to quantify what unrestricted-orientation paths are approximable by $C$-oriented ones.
\subsubsection{$C$-oriented paths in arbitrary domains}\label{sec:coriinarb}
    \ifcgta 
   
   In the long version of the paper \cite{arxiv} we prove the following: 
   
   \else

A simple generalization allows us to compute the $C$-oriented link distance map also in a domain that is not necessarily $C$-oriented. Examining the algorithm from the previous section, we see that orientations of the \e{sides} of the trapezoids did not play any role in the algorithm (what was important is that the \e{bases} of the trapezoids are $C$-oriented). The only place in which $P$'s $C$-orientedness was used is the algorithm's analysis: Observation~\ref{obs:CoriDomain} and its corollary that the final $c$-map is linear-size. That is, even if $P$ is an arbitrary domain, we can define the $c$-map as the decomposition of the free space into trapezoids labeled with the same $c$-distance from $s$, and by running our algorithm without any modifications we discover the trapezoids in the $c$-map one-by-one (i.e., label-by-label), until all of the free space is lit. Denoting by $N$ the maximum complexity of the $c$-map (i.e., the number of trapezoids in the $c$-map) over $c\in C$, we thus obtain:
\begin{cor}An $O(CN)$-size data structure can be constructed to answer $C$-oriented link distance queries in $O(C\log N)$ time; a minimum-link path can be output in time proportional to the distance. The preprocessing time and space are $O(C^2N\log N)$ and $O(CN)$.\end{cor}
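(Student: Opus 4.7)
The plan is to rerun the algorithm of Theorem~\ref{thm:cori} without modification and verify that its analysis goes through with $N$ in place of $n$. First I would check that no step actually uses the $C$-orientedness of $P$'s edges. The trapezoidations $D_c$ are defined by extending maximal free $c$-segments through vertices of $P$, which requires no constraint on the orientations of $P$'s edges. Both flush-lighting (which scans the ordered list $L_c(e)$ of $c$-trapezoids supported by an edge $e$) and straddle-lighting (which clips $c^*$-trapezoids into $(c,c^*)$-parallelograms and sweeps) operate entirely in terms of the bases of the trapezoids and the fixed orientation $c$; the sides are merely substrings of (arbitrarily oriented) edges of $P$ and serve only to delimit trapezoids. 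Hence Observation~\ref{obs:main} carries over verbatim and the algorithm correctly labels every trapezoid of every $c$-map.

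Next I would restate the complexity analysis. The sole place where $C$-orientedness of $P$ was invoked in Section~\ref{cori:anal} was Observation~\ref{obs:CoriDomain}, which bounded each $c$-map at $O(n)$ via the three-pieces-per-$D^0$-cell rule. In an arbitrary domain that rule can fail---an obstacle edge whose orientation lies outside $C$ may cause repeated partial lighting of a single trapezoid---and we absorb the combinatorial complexity of the $c$-map into the parameter $N$. Substituting $N$ for $n$, flush-lighting takes $O(C^2 N \log N)$ time (there are $O(CN)$ trapezoids $T^*$, and locating the vertices of each in the $C-1$ relevant lists $L_c(e)$ costs $O(\log N)$). The straddle-lighting argument that a trapezoid enters the event queue on at most $O(1)$ consecutive BFS steps depends only on lower-neighbor relations in the current decomposition, not on $C$-orientedness of $P$, so the total event count stays at $O(C^2 N)$ across all sweeps, at $O(\log N)$ per event. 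Space is dominated by the $C$ trapezoidations, yielding $O(CN)$, with a single reusable $O(N)$-space interval tree for the sweeps.

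For the query structure, each of the $C$ $c$-maps supports $O(\log N)$-time point location via any standard $O(N \log N)$-preprocessing data structure, so a query locates $q$ in all $C$ maps and returns the minimum label, giving $O(C \log N)$ total; backpointers stored at each $c$-trapezoid when it is labeled allow a minimum-link path to be output in time proportional to its length. The main obstacle I expect is confirming the constant-steps-in-queue bound in the absence of $C$-orientedness of $P$. I would re-examine the proof in Section~\ref{cori:anal}: it rests only on the fact that the labels of adjacent trapezoids differ by a bounded amount, which is a statement about short $c$-path detours between neighbors and uses only the axis-like behavior of $c$-segments, not anything about $P$'s edge orientations. So the constant persists and $N$ is indeed the only parameter that needs updating.
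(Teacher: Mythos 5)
Your proposal matches the paper's argument: the paper also observes that the algorithm of Theorem~\ref{thm:cori} runs unchanged (since only the trapezoid bases, not the sides, matter), that $C$-orientedness of $P$ enters only through Observation~\ref{obs:CoriDomain} bounding the $c$-map size by $O(n)$, and that substituting the $c$-map complexity $N$ for $n$ yields the claimed bounds. Your extra verification that the constant-steps-in-queue bound persists is a sound elaboration of a point the paper leaves implicit, but it is the same route.
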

The above result is not entirely satisfactory because if $P$ is not $C$-oriented, $N$ may be unbounded and even infinite -- see, e.g., Fig.~\ref{fig:noapx} (so one would have to assume a model of computation in which operations on large numbers can still be carried out in constant time). This is due to the possibility that a trapezoid from $D_c^0$---the initial trapezoidal decomposition---may get split into an unbounded number of subtrapezoids in the final $c$-map. When can this happen? If there exists a $c$-oriented line, for some $c\in C$, that intersects both bases of the trapezoid, then the link distances to different points inside the trapezoid differ only by a constant, and, thus, all of it will be lit during $O(1)$ consecutive BFS steps---without any modifications of the algorithm. Hence, to avoid the dependence on $N$, we need to modify the algorithm only to handle trapezoids whose bases cannot be straddled by a $C$-oriented line. We determine such trapezoids in $D_c^0$ and mark them as \e{problematic}.

We will declare ``deep'' portions of problematic trapezoids as separate cells in the link distance map. The path to a query point $q$ inside such a cell consists of 2 parts: a path from $s$ to enter the trapezoid, and a ``zigzag'' of extreme orientations to $q$. The number of links in the first part is given by the usual link distance map, and the number of links in the second part can be determined in constant time (assuming constant-time floor function) because of the regular pattern of the path---it bounces off of the sides of the trapezoid until reaching the query point. Overall, we obtain

    \fi 
    
\begin{theorem}\label{thm:arb}An $O(Cn)$-size data structure can be constructed to answer $C$-oriented link distance queries in \e{arbitrary} domains in $O(C\log n)$ time; a minimum-link path can be output in time proportional to the distance. The preprocessing time and space are $O(C^2n\log n)$ and $O(Cn)$.\end{theorem}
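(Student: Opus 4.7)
The plan is to run the algorithm of Theorem~\ref{thm:cori} essentially unchanged, using the observation that nowhere in Section~\ref{subsec:cori} was the orientation of trapezoid \emph{sides} used---only the $C$-orientedness of the \emph{bases} mattered. When $P$ is arbitrary, the bases of the cells of $D_c^0$ are still $C$-oriented (they are extensions of $c$-segments through vertices of $P$), so the combinatorial BFS carries over verbatim. The only ingredient of Section~\ref{subsec:cori} that appealed to $P$'s $C$-orientedness was Observation~\ref{obs:CoriDomain}; its failure is the one and only reason the final $c$-map can have unbounded complexity. To localize the damage, I would mark a cell $T\in D_c^0$ as \emph{problematic} iff no line whose orientation lies in $C$ meets both bases of $T$; this can be tested in $O(C)$ time per trapezoid during preprocessing.

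For every non-problematic $T$, I would re-derive the analogue of Observation~\ref{obs:CoriDomain}: if some $C$-oriented line straddles both bases, then once any subtrapezoid of $T$ becomes lit, all of $T$ is lit within $O(1)$ additional BFS steps, because a constant-length $c$-zigzag can sweep across $T$ using the orientation of the straddling line. Hence such a $T$ contributes only $O(1)$ subcells to the final $c$-map, and the total contribution of non-problematic cells across all $c\in C$ is $O(Cn)$. The hard part is the problematic trapezoids: inside such a cell the $c$-reachable set consists of increasingly ``deep'' zigzags between the two non-base sides, never exhausting $T$ in finitely many steps. My plan is to terminate propagation inside $T$ after a constant number of BFS steps---by which time the lit portion of $T$ already has $O(1)$ complexity---and to replace the unbounded ``deep'' remainder of $T$ by a single compressed cell. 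The key claim to verify is that for any query point $q$ lying in this deep remainder, the $c$-distance equals the stored label on the entry subcell plus a closed-form value $\lfloor f_T(q)\rfloor$, where $f_T$ depends only on the slopes of the two fixed sides of $T$ and the extreme orientation from $C$ used in the zigzag; correctness of $f_T$ rests on the fact that, once inside the deep part, every $c$-link is forced to connect the two sides of $T$, so minimum-link propagation follows a rigid geometric progression.

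Under this plan, each original trapezoid contributes $O(1)$ cells to the final representation, so the total size is $O(Cn)$. A query at $q$ performs one $O(\log n)$ point-location in each of the $C$ augmented trapezoidations, plus (in the deep case) an $O(1)$ floor-function evaluation, for a total of $O(C\log n)$ time; the path itself is traced back exactly as in Theorem~\ref{thm:cori}, with zigzag links inside deep cells generated in constant time each. Preprocessing is dominated by the Section~\ref{subsec:cori} algorithm together with the $O(Cn)$-time classification of cells and the attachment of the formulas $f_T$, yielding the claimed $O(C^2n\log n)$ time and $O(Cn)$ space. The main technical obstacle will be stating and verifying $f_T$ cleanly when a problematic $T$ is entered from both of its bases or through different extremal orientations: each such case should reduce to taking the minimum of a constant number of closed-form expressions, but making sure all corner cases (including degenerate problematic trapezoids with one side on the outer boundary) are handled is where most of the bookkeeping will sit.
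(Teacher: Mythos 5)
Your proposal follows essentially the same route as the paper's Appendix~B: mark a cell problematic iff no $C$-oriented line straddles its bases, run the Section~\ref{subsec:cori} BFS unchanged for non-problematic cells, terminate propagation inside a problematic trapezoid after $O(1)$ extra steps, and represent the deep remainder as one compressed cell whose $c$-distance is a constant-time zigzag formula using the extreme orientations of $C$ closest to the cell's sides. The ``main technical obstacle'' you flag---a problematic trapezoid entered from both bases---is exactly what the paper resolves by computing a constant-time ``bisector'' subtrapezoid where zigzags from the two ends meet, so your plan and the paper's proof match in all essentials.
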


    \ifarxiv
The proof of the theorem can be found in Appendix~\ref{app:problematic}.
    \fi

\subsubsection{Approximate $C$-oriented paths}\label{sec:coriApx}
    \ifcgta

   In the long version of the paper \cite{arxiv} we prove the following:

   \else

The $C$-oriented link distance can be 2-approximated by requiring that every second link of the path is horizontal. To find a minimum-link $C$-oriented path with this requirement one can do the BFS in the trapezoids intersection graph with one modification: instead of checking for intersection between trapezoids for all pairs of orientations, only check for intersections between the horizontal trapezoids and the other ones. Such a modification decreases the running time of our algorithm to $O(Cn\log n)$; the space remains $O(Cn)$, since the $C$ trapezoidations of $P$ are still constructed.

To reduce the space to $O(n)$ we do only the horizontal trapezoidation, and go back to the rectilinear-case ideas of Das and Narasimhan \cite{dn} who do the labeling of horizontal trapezoids \e{without} using the vertical ones (in our case, we label the horizontal trapezoids without using \e{any} other ones). In particular, without the other trapezoidations we cannot use our planting (there is simply nothing to plant!) to initialize the sweep. Thus we do both the UpSweep and the DownSweep as in \cite{dn}, starting from trapezoids labeled on the previous step.

Overall, we obtain

    \fi
\begin{theorem}An $O(n)$-size data structure can be constructed to answer $C$-oriented link distance queries to within a multiplicative error of 2; that is, if the minimum number of links in a $C$-oriented path from $s$ to the query point $q$ is $k$, the data structure will report a number $l\le2k$. The query time is $O(\log n)$, and a $C$-oriented $l$-link $s\-q$ path can be output in additional $O(l)$ time. The preprocessing time and space are $O(Cn\log n)$ and $O(n)$.\end{theorem}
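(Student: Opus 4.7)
The plan is to compute link distance only for the restricted class of $C$-oriented paths in which consecutive non-horizontal links are always separated by a horizontal link, and to show that this restriction costs at most a factor of~$2$. For the approximation bound I would take any $C$-oriented $k$-link $s\-q$ path and, between each pair of consecutive non-horizontal links, insert a horizontal segment of length zero at the turning vertex; this adds at most $k-1$ links, so the restricted optimum is at most $2k-1$.

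To compute the minimum number of links over such restricted paths, I would build only the horizontal trapezoidation $D_h$ of $P$, in $O(n\log n)$ time and $O(n)$ space, and run a BFS that labels horizontal trapezoids with their distance in the intersection graph whose edges go only between $h$-trapezoids and $c^*$-trapezoids, $c^*\in C\setminus\{h\}$. At step $k$, given the set $S^{k-2}$ of horizontal trapezoids labelled $k-2$, I want to find every not-yet-labelled horizontal trapezoid reachable from some trapezoid in $S^{k-2}$ by one $c^*$-link followed by a horizontal link, for some $c^*\in C^*$. Because we do not build the $c^*$-trapezoidations, the planting initialisation of Section~\ref{sec:rect} is unavailable; instead, following Das and Narasimhan~\cite{dn}, for each $c^*\in C^*$ I would perform both an UpSweep and a DownSweep along the $c^*$-direction, starting from the sides of the horizontal trapezoids in $S^{k-2}$, and maintain the intersection of the sweepline with the (projected) labelled trapezoids as disjoint intervals in a balanced binary search tree; every horizontal trapezoid whose base overlaps a status interval is discovered and labelled $k$.

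The running-time analysis mirrors Section~\ref{sec:rect}: each horizontal trapezoid enters the event queue of each $(h,c^*)$-sweep on only $O(1)$ consecutive BFS steps, giving $O(Cn)$ events summed over the $C-1$ sweep directions, each handled in $O(\log n)$. Hence preprocessing takes $O(Cn\log n)$ time; the working space is $O(n)$ because the interval tree is reused across sweeps and only $D_h$ is stored. A query point $q$ is located in $D_h$ in $O(\log n)$ time; its trapezoid's label $l$ satisfies $l\le 2k$ by the approximation argument, and following BFS parent pointers together with the $c^*$-link recorded by the sweep that discovered each trapezoid yields an explicit $l$-link path in $O(l)$ additional time.

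The main obstacle will be verifying that the UpSweep/DownSweep, deprived of the pots that initialised sweeps in Section~\ref{subsec:cori}, still discovers every horizontal trapezoid reachable by a single $c^*$-link from $S^{k-2}$, and that the $O(1)$ amortised bound on event-queue membership per sweep direction carries over without the pot structure to throttle queue entries.
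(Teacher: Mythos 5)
Your high-level plan matches the paper's proof closely: restrict to paths in which every second link is horizontal (cost factor $2$), build only the horizontal trapezoidation, and perform BFS via Das--Narasimhan-style UpSweeps and DownSweeps along each $c^*\in C^*$ because the planting mechanism of Section~\ref{sec:rect} is unavailable without the $c^*$-trapezoidations. The approximation argument and the accounting of time and space are also as in the paper. The concern you flag at the end --- whether the UpSweep/DownSweep, initialized from $S^{k-2}$ alone, still discovers everything while keeping queue membership amortised $O(1)$ --- is the right kind of worry but is handled in the paper essentially by the same argument as in Section~\ref{sec:rect}: each trapezoid entering a queue is either intersected by a status interval or is an upper neighbor of one that is, which pins its label to a constant-width window and bounds total events by $O(Cn)$.

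However, you omit the two genuine technical obstacles that distinguish the $C$-oriented case from the rectilinear case, and these are exactly what the paper's appendix flags as ``the only differences''. First, when $c^*$ is not vertical, a horizontal trapezoid $T'$ need not be straddled by the light rays coming from $S^{k-2}$: it can be illuminated only \emph{partially}, so $T'$ must be split into a lit subtrapezoid and a dark subtrapezoid, and you need to track, per side of each trapezoid, the extreme heights reached by the $c^*$-rays. Your BFS step as described (label the whole trapezoid $k$ if its base overlaps a status interval) would mislabel the dark portion. Second, for non-vertical $c^*$ the sweepline status changes \emph{continuously} whenever a status interval's endpoint slides along a non-parallel side of a trapezoid, so the discrete-event sweep you invoke would miss updates. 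The paper fixes this with a clipping rule: since the trapezoid is obstacle-free and contains no other cell, as soon as the sweepline enters the trapezoid, the interval can be immediately clipped to what it will be when the sweepline exits. Without these two fixes the algorithm is not correct for $C>2$, so they need to be part of the proof, not left implicit.
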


    \ifarxiv
The proof of the theorem can be found in Appendix~\ref{app:coriApx}.
    \fi

\subsubsection{Approximating paths with ``robust'' edges}\label{sec:robust}We would like to use $C$-oriented paths to approximate the link distance of paths with unrestricted orientations. Unfortunately, as Fig.~\ref{fig:noapx} (cf.\ \cite[Fig.~1]{aos}) illustrates, the number of links in a minimum-link $C$-oriented path may be much higher than that of a path with unrestricted orientations: there are geometric configurations in which an unrestricted path of one link must be replaced by many, even infinitely many, \cori links. Even if we restrict the obstacle boundaries to use the same $C$ orientations (i.e., we work in a $C$-oriented domain), Fig.~\ref{fig:stillnoapx} illustrates that an unrestricted path of one link may require $\Omega(n)$ $C$-oriented links. To use $C$-oriented paths to approximate link distances, we define a model of paths that we call ``robust'' paths.\footnote{Note the contrast with \e{length} approximation: in terms of (Euclidean) length, $C$-oriented paths approximate general paths with additive relative error of $O(1/C^2)$, irrespective of whether the domain is $C$-oriented or not.}
%
%% xxx Joe: is it clearer to say ``multiplicative error of $1+O(1/C^2)$'' instead of ``additive relative error of $O(1/C^2)$''?  (I did not change yet)
\begin{figure}
\hfil
\begin{minipage}[c]{0.4\columnwidth}
\centering\includegraphics{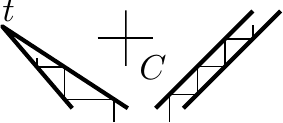}\caption{$C$-oriented min-link paths fail to approximate min-link paths with unrestricted orientations. Left: $t$ is not reached with finitely many links. Right: The number of $C$-oriented links required to pass through a corridor may depend on its width rather than on its complexity.}\label{fig:noapx}
\end{minipage}
\hfil
\begin{minipage}[c]{0.4\columnwidth}
\centering\includegraphics{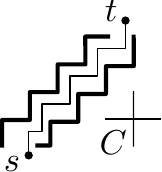}\caption{Even when the domain is $C$-oriented, a path having $O(1)$ links of arbitrary orientation can require $\Omega(n)$ $C$-oriented links.}\label{fig:stillnoapx}
\end{minipage}
\end{figure}

One feature is common to examples that show poor approximation power of $C$-oriented paths (e.g., Figs.~\ref{fig:noapx},\ref{fig:stillnoapx}): when an unrestricted path can have substantially fewer links than a $C$-oriented path, the set of orientations of a critical link of the unrestricted path is very narrow. This is also the case in instances of the minimum-link problem used in showing 3SUM-hardness (Fig.~\ref{fig:reduction})---an optimal path must ``shoot'' very precisely through free space. That is, a small deviation in the orientation of an edge in a minimum-link path renders the edge---and hence the whole path---infeasible.

We envision that such non-robustness of minimum-link paths with unrestricted orientations is highly undesirable in applications: if a path is to be followed by a robot, one would not want the robot to hit an obstacle due to a small deviation in the steering direction; if the path edges are communication links, the communication should not be interrupted due to a small error in the direction; etc. We thus quantify edge robustness as follows: for $\varphi\ge0$, an edge $pq$ is \e{$\varphi$-robust} if the isosceles triangle with altitude $pq$ and angle $2\varphi$ at the apex $p$ does not intersect obstacles (Fig.~\ref{fig:robust}). A path is $\varphi$-robust if all of its edges are $\varphi$-robust. (We use triangles centered on edges instead of circular sectors for technical reasons.) The definition is directional -- $pq$ can be $\varphi$-robust while $qp$ is not; one can modify the definition to make it symmetric. Note that according to our definition, robustness increases with $\varphi$; 0-robust is ``not at all robust''. Note also that adding vertices in the middle of a path's edges gives a new path whose robustness is at least that of the original (Fig.~\ref{fig:robust}, right). Intuitively this is fair: say, instead of sending a robot in a single command through a narrow corridor, with a single link, one may guide the robot through the corridor more carefully by adding extra reference points (path vertices) along the way. %Our definition does not block any passages for robust paths: if an edge is not robust, one may add vertices along the edge thus shortening the links and increasing the robustness

%We postulate that the robustness is a non-issue for $C$-oriented paths; that is, unlike with the unrestricted-orientations paths, it is OK to compute and use 0-robust $C$-oriented paths. rephrased as per rev1 comment
We postulate that for $C$-oriented paths there is no issue of ``wiggling'' of edges directions. That is, unlike with the unrestricted-orientations paths, if a \cori path is computed, it can actually be used with no fear of it becoming infeasible due to an error in edge orientation. The justification of such an assumption is twofold. First, from the theoretical point of view $C$-oriented paths are ``discrete'' and allow for no continuous change of edge orientation. Second, we are inspired by real-world mechanisms design: to set the orientation for an unrestricted path, one may need to turn a knob/throttle---this may not be easy to do with perfect precision. On the other hand, for $C$-oriented paths, the robot's wheels turning angle or the communication direction can be set with essentially no error. This can be achieved with any mechanism in which the shaft is connected to a gear that may rest against a notch---the gear is in a static equilibrium only when turned by a multiple of its angle of action. An example of such a mechanism is a ratchet (Fig.~\ref{fig:ratchet});
%% xxx Joe: I might remove the figure -- it seems unnecessary to me?
%
many others exist \cite{mechanicsBook}---facing crown gears, socket on a bolt, external gear fitting inside an external gear, etc.

Even though in general $C$-oriented paths do not approximate paths with unrestricted orientations at all (as seen in Figs.~\ref{fig:noapx},\ref{fig:stillnoapx}), as the next lemma shows, $C$-oriented paths approximate \e{robust} paths well. Specifically, let $\varphi = 180/i$, for some integer $i$, and let $C_\varphi$ be the set of orientations evenly spaced along the unit circle with angle $\varphi$ between consecutive orientations.
\begin{lemma}\label{lem:apx}If there exists a $k$-link $\varphi$-robust $s\-t$ path $\pi$, then there exists a $(k+1)$-link $C_\varphi$-oriented $s\-t$ path $\pi'$.\end{lemma}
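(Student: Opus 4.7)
The plan is to construct $\pi'$ explicitly with $k+1$ $C_\varphi$-oriented links. For each edge $e_j = v_{j-1}v_j$ of $\pi$, let $d_j$ denote its direction, and let $T_j$ denote the robust triangle (apex $v_{j-1}$, apex angle $2\varphi$, altitude $v_{j-1}v_j$) which is obstacle-free by the $\varphi$-robustness hypothesis. Because $C_\varphi$ consists of $i=180/\varphi$ orientations spaced by~$\varphi$, there are two $C_\varphi$-orientations $c_j^+$ and $c_j^-$ within angle~$\varphi$ of $d_j$; a ray from $v_{j-1}$ in either direction stays inside $T_j$ at least up to the perpendicular to $d_j$ through $v_j$.

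For each $j=1,\dots,k$ I choose $c_j\in\{c_j^+,c_j^-\}$ so that $c_j\neq c_{j+1}$ for all consecutive pairs---such a choice always exists because each edge offers two candidates. Let $M_j$ be the line through $v_{j-1}$ in direction $c_j$. Since consecutive $c_j$'s differ, $M_j$ and $M_{j+1}$ are non-parallel and intersect in a unique point $a_j$ (for $j=1,\dots,k-1$). Pick an auxiliary $C_\varphi$-oriented line $L$ through $v_k$ not parallel to~$M_k$, and let $a_k=M_k\cap L$. The candidate path is
\[
\pi' \;:\; v_0 \to a_1 \to a_2 \to \cdots \to a_k \to v_k,
\]
which has exactly $k+1$ segments. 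By construction the $j$-th segment (for $j\le k$) lies on $M_j$ and hence has direction $c_j$, while the last segment lies on $L$, so every segment of $\pi'$ is $C_\varphi$-oriented.

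The remaining step is obstacle-freeness, which I prove by showing that every segment of $\pi'$ lies in the obstacle-free union $\bigcup_j T_j$. The portion of $M_j$ inside $T_j$ is a sub-segment from $v_{j-1}$ to the point $u_j$ where $M_j$ exits through the base of $T_j$. The key geometric claim is that each intersection point $a_j$ lies in the overlap $T_j\cap T_{j+1}$, which always contains a nontrivial wedge around the shared vertex $v_j$; so segment $a_{j-1}a_j$ transits smoothly from the apex region of $T_j$ into the apex region of~$T_{j+1}$, staying in $T_{j-1}\cup T_j$ near $a_{j-1}$ and in $T_j\cup T_{j+1}$ near $a_j$.

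The main technical obstacle is verifying $a_j\in T_j\cap T_{j+1}$ across all angular configurations, combining the bound $|c_j-d_j|<\varphi$ (from the $C_\varphi$ selection) with the non-degeneracy $c_j\neq c_{j+1}$ (from the alternation rule). The flexibility in choosing $c_j\in\{c_j^+,c_j^-\}$ is exactly what lets us route every $a_j$ into the appropriate overlap, essentially by choosing the alternation pattern to straddle $d_j$ on the side opposite to the turn at $v_j$. In residual sub-cases where no consistent choice works---for instance when several consecutive edges are nearly collinear---one merges consecutive segments lying on a common line, which only reduces the link count, so the bound of $k+1$ is preserved throughout.
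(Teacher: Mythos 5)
Your overall setup (the robust triangles, two candidate $C_\varphi$-orientations per edge, a $k{+}1$-link output whose $j$-th link runs along a snapped line through $v_{j-1}$) matches the paper's proof, but the way you pick the snapped orientations and place the vertices is a genuinely different construction, and it has a real gap. You fix each $c_j$ in advance by a purely combinatorial alternation rule $c_j\neq c_{j+1}$ and then take the static intersections $a_j=M_j\cap M_{j+1}$. Nothing in that rule controls where $a_j$ lands, and in fact $a_j$ can fall outside both $T_j$ and $T_{j+1}$. Concretely: take $\varphi=45^\circ$, $v_0=(0,0)$, $d_1=22^\circ$, $d_2=46^\circ$ (so $c_1\in\{0^\circ,45^\circ\}$, $c_2\in\{45^\circ,90^\circ\}$). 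The alternating choice $c_1=45^\circ$, $c_2=90^\circ$ is legal, but then $a_1=(0.927,0.927)$ lies beyond the base of $T_1$, and the portion of the segment $v_0a_1$ between the base of $T_1$ and $a_1$ is outside $T_1\cup T_2$ — so the path is not certified obstacle-free. You flag this yourself ("the main technical obstacle is verifying $a_j\in T_j\cap T_{j+1}$"), but the proposed fix ("straddle $d_j$ on the side opposite to the turn") is only a heuristic and does not actually close the gap: the correct side on which $M_j$ must sit is determined by the sign of $c_{j+1}-d_j$ (not of $d_{j+1}-d_j$), which depends on the already-chosen snapped orientation $c_{j+1}$, which in turn depends on $c_{j+2}$, and so on. The dependence propagates backward from $t$, so no fixed forward alternation can resolve it. (Also, requiring $a_j\in T_j\cap T_{j+1}$ is both too strong to be true in general and unnecessary; what is actually needed is only that each link lie in the union of two consecutive triangles.)

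The paper sidesteps all of this by making the construction adaptive and running it backward: $q_k$ is defined so that it sits on a snapped segment of $T_{k-1}$ while $q_kt$ is $C_\varphi$-oriented, and then for $i=k-1,\dots,1$ the next vertex $q_i$ is defined as the point where the line through $p_i$ and $q_{i+1}$ exits $T_{i-1}$ through one of the two snapped segments $p_{i-1}p_{i-1}^\pm$. Because that line passes through $p_i$, which lies on the base of $T_{i-1}$ strictly between $p_{i-1}^-$ and $p_{i-1}^+$, it is guaranteed to exit through one of the two snapped segments (barring the perpendicular degenerate case, handled separately). This forces every $q_i$ onto a bounded snapped segment, hence inside a robust triangle, and it automatically selects the "correct" side for you: the choice between $p_{i-1}^+$ and $p_{i-1}^-$ is dictated by $q_{i+1}$, not made ahead of time. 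Each link $q_iq_{i+1}$ then lies in $T_{i-1}\cup T_i$ because it runs along a snapped line through $p_i$, and $p_i$ is a shared boundary point of both triangles. To repair your proof you would essentially need to reproduce this backward selection; the forward alternation on its own cannot be made to work.
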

\begin{proof}For an edge $p_ip_{i+1}$ of a $k$-link path $\pi=(s,p_1,p_2,\ldots,p_{k-1},p_k=t)$ let $T_{p_i}$ be the isosceles triangle with height $p_ip_{i+1}$ and angle $2\varphi$ at $p_i$. Refer to Fig.~\ref{fig:apx}.  Since $\pi$ is $\varphi$-robust, the triangle is obstacle-free and contains a $C_\varphi$-oriented segment connecting $p_i$ to the base of the triangle on each side of $p_ip_{i+1}$. Denote these {\em snapped} segments by $p_ip_i^+,p_ip_i^-$. Let $q_k$ be the point within $T_{p_{k-1}}$ where the line through $t$ that is parallel to $p_{k-1}p_{k-1}^-$ intersects $p_{k-1}p_{k-1}^+$.  Now, the line through $p_{k-1}$ and $q_k$ must cross one of the two snapped segments through $p_{k-2}$ ($p_{k-2}p_{k-2}^+$ or $p_{k-2}p_{k-2}^-$). Let $q_{k-1}$ be the point of crossing.  (In fact, the line through $p_{k-1}$ and $q_k$ must intersect {\em exactly one} of the two snapped segments, unless it is perpendicular to $p_{k-2}p_{k-1}$ (meaning it contains the base, $p_{k-2}^-p_{k-2}^+$, of $T_{k-2}$), in which case we select $q_{k-1}$ to be either point $p_{k-2}^-$ or $p_{k-2}^+$.)  Note that the segment $q_{k-1}q_k$ lies inside $T_{k-2}\cup T_{k-1}$ and is, therefore, within $P$.  (Segment $q_{k-1}q_k$ lies inside $T_{k-1}$ if $\angle p_{k-2}p_{k-1}q_k<90^\textrm{o}$; otherwise, $q_{k-1}q_k=q_{k-1}p_{k-1}\cup p_{k-1}q_k$, with $q_{k-1}p_{k-1}\subset T_{k-2}$ and $p_{k-1}q_k\subset T_{k-1}$.)  Similarly, we define $q_i$, for $i=k-2, k-3,\ldots, 1$, to be the point where the line through $p_i$ and $q_{i+1}$ crosses one of the two snapped segments through $p_{i-1}$. Note that each segment $q_iq_{i+1} \subset T_{i-1}\cup T_i$, so that the {\em snapped path} $\pi'=(s,q_1,q_2,\ldots,q_k,t)$ lies within $P$ and has each of its $k+1$ links $C$-oriented.
\end{proof}%mult 2 vs. add 1
Note that the lemma holds in an arbitrary domain (no $C$-orientedness of $P$ is used in the proof, only $\varphi$-robustness of $\pi$). Thus, along with the results from Section~\ref{sec:coriinarb}, the lemma implies
\begin{cor}An $O(\frac{n}\varphi)$-size data structure can be constructed to answer $\varphi$-robust link distance queries to within a one-sided additive error of 1. That is, if the minimum number of links in a $\varphi$-robust path from $s$ to the query point $q$ is $k$, the data structure will report a number $l\le k+1$. The query time is $O(\frac{\log n}\varphi)$, and a $C_\varphi$-oriented $l$-link $s\-q$ path can be output in additional $O(l)$ time. The preprocessing time and space are $O(\frac{n}{\varphi^2}\log n)$ and $O(\frac{n}\varphi)$.\end{cor}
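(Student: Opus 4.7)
The plan is to combine Lemma~\ref{lem:apx} with Theorem~\ref{thm:arb}, applied to the specific orientation set $C_\varphi$. First I would instantiate Theorem~\ref{thm:arb} with $C=C_\varphi$, noting that $|C_\varphi|=180/\varphi$; this immediately yields a data structure of size $O(|C_\varphi|n)=O(n/\varphi)$, with preprocessing time $O(|C_\varphi|^2 n\log n)=O(n\log n/\varphi^2)$ and preprocessing space $O(|C_\varphi|n)=O(n/\varphi)$, answering $C_\varphi$-oriented link distance queries in time $O(|C_\varphi|\log n)=O((\log n)/\varphi)$, and producing an $l$-link $C_\varphi$-oriented path in additional $O(l)$ time. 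This accounts for every resource bound claimed in the corollary.

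Next I would verify the additive-$1$ approximation guarantee. Let $k$ be the $\varphi$-robust link distance from $s$ to $q$, realized by some $\varphi$-robust path $\pi$. By Lemma~\ref{lem:apx}, applied in the (arbitrary) domain $P$, there exists a $C_\varphi$-oriented $s\-q$ path with at most $k+1$ links. Consequently the $C_\varphi$-oriented link distance from $s$ to $q$ is at most $k+1$, so the number $l$ reported by the data structure of Theorem~\ref{thm:arb} satisfies $l\le k+1$, which is precisely the one-sided additive error of $1$ claimed.

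The only subtlety worth flagging is that Lemma~\ref{lem:apx} uses no $C$-orientedness assumption on $P$, so it is legitimate to invoke it in the arbitrary-domain setting of Theorem~\ref{thm:arb}; this is the point that makes the composition clean. Finally, the output $C_\varphi$-oriented path produced by the data structure is itself returned as the witness path, so no additional construction is needed beyond what Theorem~\ref{thm:arb} already provides. Since the argument is purely a composition of two already-established results, there is no substantial obstacle; the only care is to track the dependence of the bounds in Theorem~\ref{thm:arb} on $|C_\varphi|=\Theta(1/\varphi)$ and substitute consistently.
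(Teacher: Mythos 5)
Your proposal is correct and takes exactly the approach the paper intends: the corollary is stated as a direct consequence of Lemma~\ref{lem:apx} (applied in an arbitrary domain) together with Theorem~\ref{thm:arb} instantiated at $C=C_\varphi$ with $|C_\varphi|=\Theta(1/\varphi)$, and your substitution of $1/\varphi$ into all the bounds of Theorem~\ref{thm:arb} reproduces the claimed resource bounds. The subtlety you flag — that Lemma~\ref{lem:apx} requires only $\varphi$-robustness of $\pi$ and no $C$-orientedness of $P$ — is precisely the remark the paper makes immediately before stating the corollary.
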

%If $P$ is itself $C_\varphi$-oriented (which is often the case in applications \cite{aos}),
\begin{figure}
\begin{minipage}[c]{0.4\columnwidth}
\centering\includegraphics{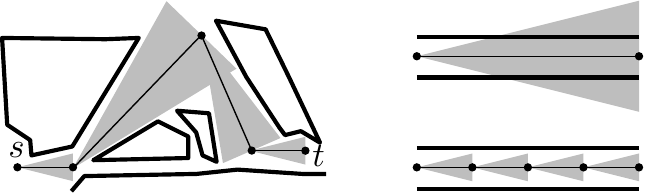}\caption{Left: A $\varphi$-robust 4-link path; $2\varphi$ is the subtended angle of the shaded wedges. A 2-link $s\textrm-t$ path exists, but it is not robust. Right: Adding vertices to increase the robustness.}\label{fig:robust}
\vspace{10pt}
% xxx Joe: I would omit the ratchet figure? (it is also low-resolution)
\centering\includegraphics[height=1in]{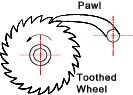}\caption{Ratchet (yankeewombat.com).}\label{fig:ratchet}
\end{minipage}
\hfill%\hspace{0.5cm}
\begin{minipage}[c]{0.5\columnwidth}
\centering\includegraphics[scale=0.45]{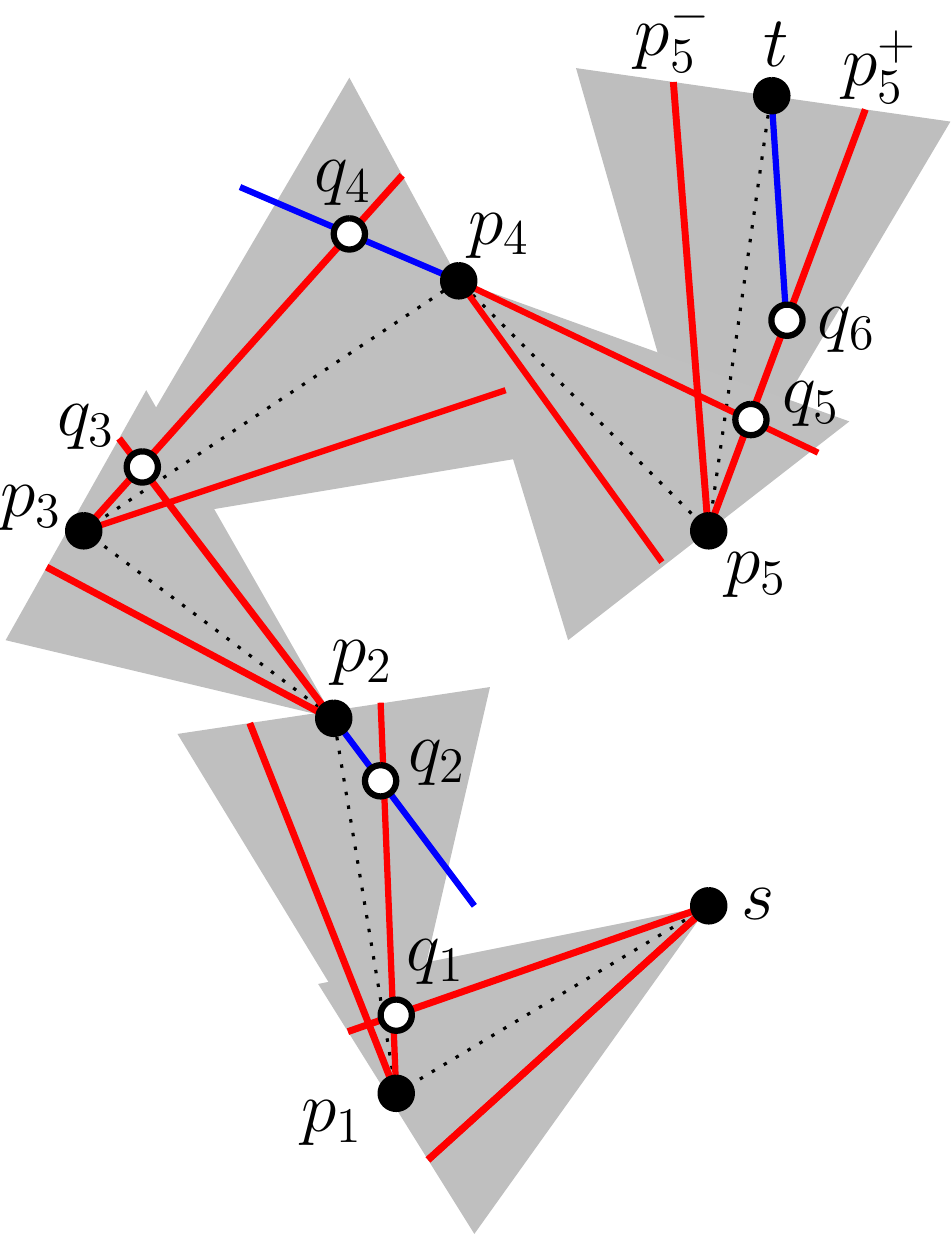}\caption{Converting an arbitrarily oriented robust path $\pi$ (dotted) into a $C_\varphi$-oriented path $\pi'$ that follows snapped segments (red). Vertices of $\pi'$ are the hollow circles. There is one vertex per link of~$\pi$.}\label{fig:apx}
\end{minipage}
\end{figure}
%\begin{figure}\centering(a)\includegraphics[height=1in]{}(b)\includegraphics[height=1in]{}(c)\includegraphics[height=1in]{}(d)\includegraphics[height=1in]{}
%\caption{Ratchets. (a): gear, pawl, spring. (b): A is the ratchet wheel, B is an oscillating lever carrying the driving pawl C; a supplementary pawl at D prevents backward motion of the wheel. (c): Motion varied in the ratchet wheel A by the number of teeth swept over by the pawl B. (d): Gear and pawl on an antique printing press.}\end{figure}
%\begin{figure}[h!]\centering\includegraphics[height=1in]{}\caption{Spring blade ratchet uses flat springs as drive and holding pawls. The wheel can be indexed with accuracies of about \e{one-tenth of a degree}. Sources: .}\label{fig:ratchet}\end{figure}
%\url{http://www.eleceng.adelaide.edu.au/Groups/parrondo/articles/Playing\ both\ sides,\ Erica\ Klarreich.htm}, \url{http://www.cs.cmu.edu/~rapidproto/mechanisms/chpt8.html}, \url{http://knowledgepublications.com/history/mechanical_movements_detail.htm}, \url{http://www.flickr.com/photos/bigjake1942/3905300970/}, \url{http://www.roymech.co.uk/Useful_Tables/Cams_Springs/Ratchets.html}

\section{Conclusion}We presented a collection of results related to finding minimum-link paths in polygonal domains. Many open questions remain, including:
\begin{description}
\item Is it 3SUM-hard to compute minimum-link rectilinear paths in 3D, or can one obtain a subquadratic-time algorithm? A nearly-quadratic-time algorithm for 3D was given in~\cite{wads11}.
\item Can one reduce the gap between 2 and $O(\sqrt{h})$ for approximability of minimum-link paths in polygonal domains with holes?
\end{description}
\paragraph{Acknowledgments}We are grateful to the three anonymous reviewers for their many comments that helped improving the presentation of the paper.  We thank Estie Arkin for suggesting the 3SUM-hardness construction, and we thank Haitao Wang for discussions on several aspects of the paper, including the remark that one can test in quadratic time if the $s\textrm-t$ link distance is at most 3.  J. Mitchell is partially supported by the National Science Foundation (CCF-1018388), Metron Aviation, and NASA Ames. V. Polishchuk is funded by the Academy of Finland grant 1138520. M. Sysikaski is funded by the Research Funds of the University of Helsinki.
\bibliographystyle{abbrv}\bibliography{ml}

\ifarxiv
    \clearpage
\fi

\section*{Appendix}
\appendix\section{A note on a claim in \cite{ylwBends}}\label{app:Lee}The abstract of \cite{ylwBends} announces an optimal $O(n\log n)$-time $O(n)$-space algorithm to find a minimum-link path (a minimum-\e{bend} path, or MBP, in terminology of \cite{ylwBends}): ``optimal $\theta(e \log e)$ time algorithms are presented to find the shortest path and the minimum-bend path using linear space''. We did not find the MBP considered anywhere in the paper, though. Moreover, the penultimate paragraph of the Introduction (p.~712) states that ``problems MBP and yD-SP have been solved optimally in [5], [11], [24], [25]''. The references [5], [11], [24], [25] from \cite{ylwBends} are our \cite{dn,iaFocs,os,ohtsuki}. However, Imai and Asano \cite{ia1,ia2,iaFocs} claimed $O(n\log n)$ time \e{and} space for their method. Also, Ohtsuki's algorithm \cite{ohtsuki} runs in $O(n\log^2n)$ time and $O(n)$ space (as acknowledged in the second paragraph of the Introduction \cite[p.~711]{ylwBends}). It is possible that, at the time of writing \cite{ylwBends}, only the time bound was considered, and not the space bound, in describing the ``optimality'' of the algorithm. Unfortunately, too much time passed to resolve the possible confusion now \cite{LeePersonal}.

   \ifarxiv

\section{Handling problematic trapezoids}\label{app:problematic}Recall that a trapezoid is \e{problematic} if its bases cannot be straddled by a $C$-oriented line.

\begin{wrapfigure}{l}{.3\columnwidth}\centering\includegraphics{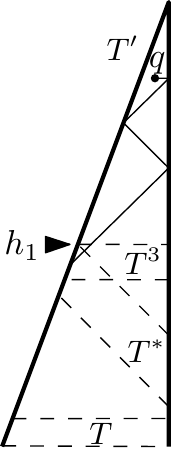}\caption{A zigzag through $T'$. The labels of $T,T^*,T^3$ are $k,k+2,k+3$, respectively.}\label{fig:problematic}\end{wrapfigure}
Let $T'$ be a problematic $c$-trapezoid. As in Section~\ref{subsec:cori}, assume the orientation $c\in C$ is horizontal and implicit, and $C^*=C\setminus c$. We first describe how to handle $T'$ when it is a triangle, i.e., when one of its bases has length 0 (Fig.~\ref{fig:problematic}). Without loss of generality, the 0-length base is the upper base of $T'$. Let $c_1,c_2\in C$ be the orientations closest to those of the sides of $T'$; we refer to orientations $c_1$ and $c_2$ as \e{extreme} orientations. We say that a path $\pi'$ from a point $q\in T'$ is a \e{zigzag} if it starts with a horizontal segment and its vertices bounce between the sides of $T'$ using extreme-orientation links %, i.e., if every vertex of $\pi'$ (except, possibly, $q$) belongs to a side of $T'$ and every link is extreme;
(two zigzags start at $q$, one going left, the other going right). Let $S_1,S_2$ denote the sides of $T'$. Given two points $h_1\in S_1,h_2\in S_2$, the following query can be answered in constant time (assuming constant-time floor function): What is the minimum number of links in a zigzag from $q$ needed to reach $S_1$ below $h_1$ or to reach $S_2$ below $h_2$? Denote the answer to the query by $K(q,h_1,h_2)$. %(note that $K$ need not be bounded).

Let $k$ be the BFS step at which $T'$ is first partially lit. Continue the BFS for 3 more steps. Let $T^3\subset T'$ be the label-$(k+3)$ $c$-trapezoid, and let $h_1\in S_1,h_2\in S_2$ be the highest points on the sides of $T'$ reached by the extreme-orientations label-$(k+2)$ trapezoids. We declare the part of $T'$ above $T^3$ to be a separate single cell in the $c$-map. The $c$-distance to a query point $q$ in the cell is given by $k+2+K(q,h_1,h_2)$.

To justify the correctness of our approach, recall from Section~\ref{subsec:int} that only flush-lighting leads to trapezoid splitting (straddle-lighting does not). This means that the subtrapezoid $T\subset T'$ lit at step $k$ was flush-lit. Similarly, every subtrapezoid $T''\subset T'$ whose label $k''$ is larger than $k$ is flush-lit. Consider a subtrapezoid $T''\subset T'$ with label $k''\ge k+3$. Let $T^*$ be the $c^*$-trapezoid with label $k''-1$ that (flush-)lit $T''$. It appears that $T^*$ was flush-lit itself, for otherwise its side must have been supported outside $T'$, but then $T^*$ must have intersected with the label-$k$ trapezoid $T$---a contradiction to $k''-1\ge k+2$.

The crucial observation about flush-lighting is that a minimum-link path to a point $q$ inside a flush-lit trapezoid without loss of generality bounces off of the trapezoid side (i.e., the path's last turn before $q$ without loss of generality belongs to a side of the trapezoid). Hence, a minimum-link $c$-path $\pi'$ from a point $q\in T''$ without loss of generality bounces off of the sides of $T'$ (at least) until reaching a label-$(k+2)$ trapezoid. By a local modification argument, it does not hurt to make all non-horizontal, bouncing links of $\pi'$ extreme. Thus, without loss of generality, the part of $\pi'$ reaching an extreme label-$(k+2)$ trapezoid is a zigzag, which proves the formula for the link distance to~$q$.

%old argument below; not quite very good.  Consider a minimum-link path $\pi$ from a point $q\in T'\setminus T''$ to $s$; let $qr$ be the first link of $\pi$. Let $xy$ be the link of $\pi$ that exits $T$; let $wx,vw,uv,pu$ be the links preceding $xy$. Without loss of generality, $wx,vw$ and all the other links of $\pi$ before $xy$ alternate between the extreme orientations (by a local modification argument, any non-extreme link of $\pi$ can be replaced by an extreme one); moreover, all vertices of the $r\-\dots\-v$ subpath of $\pi$ without loss of generality belong to sides of $T$ (by a local modification, any link of the subpath can be extended to hit a side). That is, the $q\-\dots\-w$ is without loss of generality a zigzag. Moreover, since $yx$ intersects $T$ (a label-$k$ trapezoid), $yx$ is either $k-1$st, $k$th or $k+1$st link of $\pi$ counting from $s$; consequently, $xw$ is $k$th, $k+1$st or $k+2$nd link, and since $xw$ is extreme, it belongs to a label-$k''$ extreme trapezoid, with $k\le k''\le k+2$. Thus, one of $vw,uv,pu$ touches a side of $T'$ inside a side of an extreme label-$(k+2)$ trapezoid; i.e., below $h_1$ or $h_2$.

We now describe the modifications of the procedure to handle a problematic trapezoid $T'$ whose bases are each of positive length. Exactly as above, we find $h_1\in S_1$ and $h_2\in S_2$, the highest points on the sides of $T'$ reached by the extreme-orientations label-$(k+2)$ trapezoids. We then ``propagate'' zigzags from $h_1$ and $h_2$ up to the other end of $T'$. Specifically, let $T^*$ be the $c^*$-trapezoid, supported by a side of $T'$, that penetrates deepest into $T'$ from its upper base ($T^*$ can be read from the support list of the edge supporting the side). In constant time we can determine the minimum number $K^*$ of links necessary to reach $T^*$ from $h_1$ or $h_2$. We assign \e{temporary} label $k+2+K^*$ to $T^*$. If $T^*$ is not lit by BFS step $k+2+K^*$, the label becomes permanent, and $T^*$ is inserted into $S_{c_i}^{k+2+K^*}$. Otherwise, if $T^*$ is lit before the step $k+2+K^*$, we propagate the zigzag path from it down through $T'$. In constant time we determine where the path meets with zigzags from $h_1,h_2$ and establish a ``bisector'' trapezoid $T_b\subset T'$ in the $c$-map; the bisector can be reached through either base of $T'$. The portions of $T'$ below and above the bisector become separate cells in the map, just as in the case of a degenerate, triangular trapezoid described above.%this is not 100% correct description: there are few trapezoids on both end of T' that ``pad'' the cells.

In comparison with Section~\ref{subsec:cori}, we do $O(C)$ additional work per trapezoid in $D^0$. Thus our time and space bounds of Theorem~\ref{thm:cori} carry over to the case of \cori paths in arbitrarily oriented domains.

\section{Approximating \cori paths}\label{app:coriApx}We are 2-approximating the $C$-oriented link distance by requiring that every second link of the path is horizontal. To find a minimum-link $C$-oriented path with this requirement, it is enough to do only the horizontal trapezoidation, and do a BFS in it (again, starting from the maximal free-space horizontal segment though $s$): label with $k+2$ the unlabeled trapezoids that are seen, in at least one direction $c^*\in C$, from a label-$k$ trapezoid. This is \e{exactly} the problem solved by Das and Narasimhan \cite{dn} (for the case when $c^*$ is vertical). Thus, for each orientation $c^*$ we can march through the decomposition with a $c^*$-UpSweep and a $c^*$-DownSweep of \cite{dn}: just as \cite{dn}, we do not need the $c^*$-trapezoidation for that. The only difference is that when $c^*$  is not vertical, some trapezoids may be lit only partially, but this is easy to keep track of by maintaining, for each side of every trapezoid, the maximum and minimum height reached by the $c^*$-rays coming from label-$k$ trapezoids.

To give more details, we first recap the sweep algorithm of Das and Narasimhan \cite{dn}. Recall that the goal of the UpSweep is as follows: Given a set $S$ of trapezoids (labeled $k$), label with $k+2$ the trapezoids that are seen by looking up from $S$. The sweepline starts at $-\infty$, with empty status, and moves upward. The sweepline status is the set of points on the line that are seen by looking up from $S$. Thus, the status is a set of disjoint intervals, which can be kept in a simple structure like the one in Section~\ref{sec:rect} (we see no need to use a complicated structure to maintain some ``fronts'' of merging and splitting ``windows'' as proposed in \cite{dn}).

The events are bases of trapezoids. The events are ordered by height. In case of ties, upper bases are processed before lower bases. Initially, the queue contains the trapezoids~$S$.

Processing of the lower base of a trapezoid $T\in S$ is simple: the base is added to the sweepline status. Processing of the upper base of a trapezoid $T\in S$ involves the following: (1)~the part of the base that rests against an obstacle (if any) is removed from the status; (2)~bases of the upper neighbors of $T$ enter the queue.

Processing of the lower base of a trapezoid $T'\notin S$ is as follows: if the base does not overlap with the sweepline status, the upper base of $T'$ is removed from the queue; otherwise, $T'$ is labeled $k+2$. Processing of the upper base of a trapezoid $T'\notin S$ happens, therefore, only if $T'$ is labeled $k+2$ and involves the same steps as processing the upper base of a trapezoid $T\in S$: (1)~the part of the base that rests against an obstacle (if any) is removed from the status; and, (2)~bases of the upper neighbors of $T$ enter the queue.

The DownSweep is analogous.
\begin{remark}Not surprisingly, there are clear parallels between (the recap of) Das and Narasimhan's algorithm given here and our planting-based, upsweep-only algorithm from Section~\ref{sec:rect}. Both algorithms use the same idea of labeling the trapezoids step-by-step. Moreover, technically, the sweepline status and the event queue are the same in both algorithms. The only formal difference is how the event queue is initialized: with the aid of planting (as in Section~\ref{sec:rect}) or just by $S$ (here). The conceptual difference is whether the vertical trapezoidation is used explicitly (as in Section~\ref{sec:rect}) or not. We extended both ideas to the general case of $C\ge2$. The former appeared more useful for computing exact minimum-link $C$-oriented paths (Section~\ref{subsec:cori}); the latter is useful in achieving our goal here of 2-approximating a $C$-oriented path in $O(n)$ space, without building the other $C-1$ trapezoidations.\end{remark}
In the general case (when the domain is not rectilinear and $c^*$ is not vertical) the goal of the UpSweep is to label with $k+2$ the trapezoids that are seen by looking from $S$ in direction $c^*$. Analogously to the rectilinear case, the sweepline status is the set of points on the sweepline that are seen by looking up in orientation $c^*$. The issue now is that if a status interval $I$ touches a side of a trapezoid $T$, then the status changes continuously (Fig.~\ref{fig:upsweep}). Nevertheless, since $T$ is obstacle-free, $I$ cannot intersect any other trapezoid inside $T$. Thus, as soon as the sweepline enters $T$, we clip $I$ to what it should be after the sweepline exits~$T$.
\begin{figure}\centering\includegraphics[scale=.9]{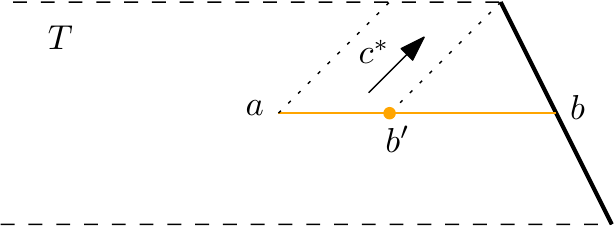}\caption{As the sweepline moves up, the sweepline status interval $ab$ changes continuously, as $b$ slides along the side of $T$. However, inside $T$ the interval does not intersect any other trapezoid. Thus, as soon as the sweepline enters $T$, $ab$ can be clipped to $ab'$; there will be no undetected intersections (no false-negatives).}\label{fig:upsweep}\end{figure}

Another change we need for the lower-base events is due to the fact that a trapezoid may get only partially lit; thus, we need to find the highest point of $T$ touched by the intervals. This is easy to do by looking at the intervals that we clip away. After finishing all of the sweeps we split partially lit trapezoids as in Section~\ref{subsec:cori}.

The described clipping of the sweepline status on a lower-base event and the trapezoid splitting are the only differences from the rectilinear case. Upper-base events are handled exactly as before.

    \fi

\end{document}

Possible red-flag appendices
\section{Some notes on the work of Das and Narasimhan}A footnote now
%The next 2 are concerned with minimum-link outer apx -- we retired it
\section{Some notes on the work of GHMS}. Ambiguous def of junction triangle. (Joe agreed, referred to Jack -- was his thesis or something.) %Finding minimum-link separating cycles was studied in \cite{ms,ghms}. In particular, in \cite{ghms} an $O(h)$ \e{additive} approximation to the problem of computing a minimum-link cycle enclosing \e{all} holes of a polygonal domain is given.
\section{Some notes on the work of Mitchell and Suri}.Not well defined. Red-red or something moats. (Joe agreed.)
% History (also SP) from L1confusion

%Robustness of paths was previously addressed in terms of \e{distance} tolerance by employing high-clearance \cite{vvcomplex,corridors,retraction} or thick-paths \cite{socg08journal,socg07,fun,chew} models. Similar, distance-based robustness was considered in the context of \e{curvature-constrained} paths in \cite{smooth,barraquand,apxcurv}.